\newcommand{\bN}{ {\mathbb  N}}
\newcommand{\bZ} { {\mathbb Z}}
\newcommand{\bQ}{ {\mathbb  Q}}
\newcommand{\bC}{ {\mathbb  C}}
\newcommand{\bK}{ {\mathbb  K}}
\newcommand{\si} { {\sigma}}
\newcommand{\lc}{ \operatorname{lc}}
\newcommand{\lcm}{ \operatorname{lcm}}
\newcommand{\pa}{ {\partial}}
\newcommand{\dis}{ \operatorname{dis}}
\newcommand{\qpol}{ {\bK(q)[x][\pa]}}
\newcommand{\qrat}{ {\bK(q, x)[\pa]}}
\newcommand{\cont}{\operatorname{Cont}}
\newtheorem{thm}{Theorem}[section]
\newtheorem{cor}[thm]{Corollary}
\newtheorem{lemma}[thm]{Lemma}
\newtheorem{prop}[thm]{Proposition}
\newtheorem{conj}[thm]{Conjecture}
\newtheorem{defn}[thm]{Definition}
\newtheorem{ex}[thm]{Example}
\newtheorem{algo}[thm]{Algorithm}
\begin{document}

\title{Desingularization in the $q$-Weyl algebra}

\author{Christoph Koutschan\thanks{Supported by the Austrian Science Fund (FWF): P29467-N32.\\
  E-mail addresses: christoph.koutschan@ricam.oeaw.ac.at, zhangy@amss.ac.cn}\ }
% \author[2]{Ziming Li\thanks{Supported by the NSFC grants (91118001, 60821002/F02)
% and a 973 project (2011CB302401). Email: zmli@mmrc.iss.ac.cn}}
\author{Yi Zhang$^{\ast}$}
%\thanks{Supported by the Austrian Science Fund (FWF): P29467-N32. E-mail: zhangy@amss.ac.cn}}
%\author[1]{Author C\thanks{C.C@university.edu}}
%\author[2]{Author D\thanks{D.D@university.edu}}
%\author[2]{Author E\thanks{E.E@university.edu}}
% \affil[2]{KLMM, AMSS, Chinese Academy of Sciences, Beijing, China}
% \affil[1, 3]{Institute for Algebra, Johannes Kepler University Linz, Austria}
\affil{Johann Radon Institute for Computational and Applied Mathematics (RICAM), Austrian Academy of Sciences, Austria}
%\affil[2]{KLMM, AMSS, Chinese Academy of Sciences, Beijing 100190, China}

%\renewcommand\Authands{ and }

% \alignauthor
% \leavevmode
% \mathstrut
% Ziming Li\titlenote{Supported by the NSFC grants (91118001, 60821002/F02)
% and a 973 project (2011CB302401).\vspace{5pt}}\\[\smallskipamount]
%  \affaddr{\leavevmode\mathstrut KLMM, AMSS, Chinese Academy of Sciences, Beijing 100190, China}\\
%  \affaddr{\mathstrut 4040 Linz, Austria}\\
%  \affaddr{\mathstrut zmli@mmrc.iss.ac.cn}
% \and
%% \mathstrut
% Yi Zhang\titlenote{Supported by the Austrian Science Fund (FWF) grants Y464-N18, NSFC grants (91118001, 60821002/F02)
% and a 973 project (2011CB302401).\vspace{-4pt}}\\[\smallskipamount]
%  \affaddr{\leavevmode\mathstrut Institute for Algebra, Johannes Kepler University, Linz A-4040, Austria}\\
%  \affaddr{\leavevmode\mathstrut KLMM, AMSS, Chinese Academy of Sciences, Beijing 100190, China}\\
%  \affaddr{\leavevmode\mathstrut zhangy@amss.ac.cn}

\date{}
\maketitle

%\enlargethispage{50pt}
\begin{abstract}
In this paper, we study the desingularization problem in the first $q$-Weyl
algebra.  We give an order bound for desingularized operators, and thus derive
an algorithm for computing desingularized operators in the first $q$-Weyl
algebra.  Moreover, an algorithm is presented for computing a generating set
of the first $q$-Weyl closure of a given $q$-difference operator.  As an
application, we certify that several instances of the colored Jones polynomial
are Laurent polynomial sequences by computing the corresponding desingularized
operator.
\end{abstract}

\section{Introduction}\label{SECT:intro}

The desingularization problem has been primarily studied in the context of
differential operators, and more specifically, for linear differential
operators with polynomial coefficients. The solutions of such an operator are
called \emph{D-finite}~\cite{Stanley1980} or \emph{holonomic} functions. 
It is well known~\cite{Ince1926} that a singularity (e.g., a pole) at 
a certain point~$x_0$ of one of the solutions
must be reflected by the vanishing (at~$x_0$) of the leading coefficient of
the operator. The converse however is not necessarily true: not every zero of
the leading coefficient polynomial induces a singularity of at least one
function in the solution space. The goal of desingularization is to construct
another operator, whose solution space contains that of the original operator,
and whose leading coefficient vanishes only at the singularities of the
previous solutions. Typically, such a desingularized operator will have a
higher order, but a lower degree for its leading coefficient.  In summary, desingularization
provides some information about the solutions of a given differential
equation.

For linear ordinary differential and recurrence equations, 
desingularization has been extensively studied 
in~\cite{Abramov1999, Abramov2006, Chen2013, Chen2016, Barkatou2015}. 
Moreover, the authors of~\cite{Yi2017} develop algorithms for the multivariate case. 
As applications, the techniques of desingularization can be used to 
extend P-recursive sequences~\cite{Abramov1999}, 
certify the integrality of a sequence~\cite{Abramov2006}, 
check special cases of a conjecture of Krattenthaler~\cite{Zhang2016} and 
explain order-degree curves~\cite{Chen2013} for Ore operators. 

The authors of~\cite{Chen2016, Zhang2016} also give general algorithms for the Ore case. 
However, from a theoretical point of view, the story is not yet finished, 
in the sense that there is no order bound for desingularized operators in the Ore case. 
In this paper, we consider the desingularization problem in the first $q$-Weyl algebra. 
Our main contribution is to give an order bound (Theorem~\ref{THM:orderbound}) 
for desingularized operators, 
and thus derive an algorithm (Algorithm~\ref{ALGO:desinop}) 
for computing desingularized operators in the first $q$-Weyl algebra.  
In addition, an algorithm (Algorithm~\ref{ALGO:qWeylclosure}) is presented 
for computing a generating set of 
the first $q$-Weyl closure of a given $q$-difference operator.

As an example, consider the $q$-holonomic sequence
\[
  f(n) = [n]_q := \frac{q^n-1}{q-1}
\]
that is a $q$-analog of the natural numbers.  The minimal-order homogeneous
$q$-recurrence satisfied by $f(n)$ is
\[
  (q^n-1)f(n+1) - (q^{n+1}-1)f(n) = 0,
\]
in operator notation:
\begin{equation}\label{EQ:op1}
  \bigl((x-1)\pa - q x + 1\bigr) \cdot f(n) = 0,
\end{equation}
where $x = q^n$ and $\pa \cdot f(n) = f(n + 1)$. 
When we multiply this operator by a suitable left factor, we obtain
a monic (and hence: desingularized) operator of order~$2$:
\begin{equation}\label{EQ:op2}
  \frac{1}{q x-1}\bigl(\pa - q\bigr)\bigl((x-1)\pa - q x + 1\bigr) =
  \pa^2 - (q+1)\pa + q.
\end{equation}

As it is typically done in the shift case~\cite{Abramov1999}, we view a
$q$-difference operator as a tool to define a $q$-holonomic
sequence. Alternatively, one could take the viewpoint of~\cite{Abramov2006}
and study solutions of $q$-recurrences that are meromorphic functions in the
complex plane (for this, let $q\in\bC$ be transcendental), and whose poles are
somehow related to the zeros of the leading coefficient. In that sense, the
factor $x-1$ in~\eqref{EQ:op1} indicates that there may be a pole at $x=q$,
but in fact, the solution $f(x)=\frac{x-1}{q-1}$ is an entire function and
does not have any pole, which is in agreement with the fact that there exists
a desingularized operator~\eqref{EQ:op2}. However, in contrast to the
differential case, in the shift case one also has to take into account poles
that are \emph{congruent}~\cite{Abramov2006} to a zero of the leading
coefficient.  We expect that the same phenomenon occurs in the $q$-case, but
since our main interest is in sequences, we do not investigate it in more
detail here. 

As an application, we study several instances of the colored Jones
polynomial~\cite{GaroufalidisLe05,GaroufalidisSun,GaroufalidisKoutschan12a},
which is a $q$-holonomic sequence arising in knot theory and which is a
powerful knot invariant. By inspecting this sequence for a particular given
knot, one finds that all its entries seem to be Laurent polynomials, and not,
as one would expect, more general rational functions in~$q$. By computing the
corresponding desingularized operator, we can certify that the sequence under
consideration actually is constituted of Laurent polynomials, and that no
other denominators than powers of~$q$ can appear.

\section{Rings of \texorpdfstring{$q$}{q}-difference operators} \label{SECT:ringqd}
Throughout the paper, we assume that $\bK$ is a field of characteristic zero, 
and $q$ is transcendental over $\bK$.
% $q$ is a nonzero element in some field extension of~$\bK$, 
% and $q$ is not a root of unity of~$\bK$.
For instance, $\bK$ can be the field of complex numbers and $q$ a transcendental indeterminate.
Let $\bK(q)[x]$ be the ring of usual commutative polynomials over~$\bK(q)$. 
The quotient field of $\bK(q)[x]$ is denoted by $\bK(q, x)$.
% The quotient field of $\bK[\bx]$ is denoted by $\bK(\bx)$.
Then we have the \emph{ring of $q$-difference operators with rational function 
coefficients} or \emph{$q$-rational algebra}~$\qrat$,
in which addition is done coefficient-wise and multiplication is defined by associativity via the
commutation rule
\[
 \pa f(x) = f(qx) \pa \ \ \text{ for each } f(x) \in \bK(q, x).
\]
The variable $x$ acts on a function $g(x)$ by the usual multiplication, 
and the $q$-difference operator $\pa$ acts on it by 
the \emph{$q$-dilation} with respect to $x$: 
$$\pa(g(x)) = g(qx).$$
This ring is an Ore algebra~\cite{Robertz2014, Salvy1998}. 

Another ring is $\qpol$, which is a subring of~$\qrat$.
We call it the \emph{ring of $q$-difference operators with polynomial coefficients} or
the \emph{q-Weyl algebra}~\cite[Section 2.1]{Stavros2012}.

Given $P \in  \qpol \setminus \{ 0 \}$, we can uniquely write it as
\[
 P = \ell_r \pa^r + \ell_{r-1} \pa^{r-1} + \cdots + \ell_0
\]
with $\ell_0, \ldots, \ell_r \in \bK(q)[x]$ and $\ell_r \neq 0$.
We call~$r$  the \emph{order}, and~$\ell_r$ ~the \emph{leading coefficient} of $P$.
They are denoted by~$\deg_{\pa}(P)$ and~$\lc_{\pa}(P)$, respectively.
We call~$\ell_0$ the \emph{trailing coefficient} of $P$.
Without loss of generality, we assume that $\ell_0 \neq 0$ throughout the paper. 
Otherwise, let $t$ be the minimal index such that $\ell_t \neq 0$. 
Set $\tilde{P} = \pa^{-t} P$. 
Then the trailing coefficient of $\tilde{P}$ is $\pa^{-t}(\ell_t)$, 
which is a nonzero polynomial in $\bK(q)[x]$. 
As a matter of convention, we say that the zero operator in $\qpol$ has order $-1$. 

Let $\sigma\colon \bK(q)[x] \rightarrow \bK(q)[x]$ be a ring automorphism that leaves the
elements of~$\bK(q)$ fixed and $\sigma(x) = q x$.
Assume that~$Q \in \qpol$ is of order~$k$. A repeated use of the commutation rule yields
\begin{equation} \label{EQ:productlc}
\lc_{\pa}(QP) = \lc_{\pa}(Q) \si^{k}(\lc_{\pa}(P)).
\end{equation}
Assume that $S$ is a subset of $\qrat$, then the left ideal generated by~$S$ is denoted by $\qrat  S$. 
For an operator $P \in \qpol$,
we define the \emph{contraction ideal} or \emph{$q$-Weyl closure} of $P$:
\[
  \cont(P) := \qrat P \;\cap\; \qpol.
\]

\section{Dispersion in the \texorpdfstring{$q$}{q}-case} \label{SECT:qdispersion}
In this section, we define the dispersion of two polynomials in $\bK(q)[x]$ and 
present an algorithm for computing it, based on irreducible factorizations over the ring~$\bK[q][x]$. 
The dispersion in the $q$-case will be used in the next section for giving an order bound of a desingularized operator 
(Definition~\ref{DEF:desingularization}). 

\begin{lemma} \label{LEM:finite}
The following claims hold:
\begin{itemize}
 \item [(i)] If $p(x)$ is an irreducible polynomial in $\bK(q)[x]$ of positive degree with $p(0) \neq 0$, 
 so is $p(q^\alpha x)$ for each $\alpha \in \bZ$.
 \item [(ii)] Let $p(x)$ be an irreducible polynomial in $\bK(q)[x]$ of positive degree with $p(0) \neq 0$. Then 
 \[
  \gcd(p(q^\alpha x), p(x)) = 1 \quad\text{for each } \alpha \in \bZ \backslash \{ 0 \}.
 \]
 \item [(iii)] Let $f(x), g(x)$ be two polynomials in $\bK(q)[x]$ with $f(0) \neq 0$. 
 Then the set 
\begin{equation} \label{EQ:dispersionset}
 \{ \alpha \in \bN \mid \deg_x(\gcd(f(q^\alpha x) , g(x))) > 0 \} 
\end{equation}
 is a finite set. 
\end{itemize}
\begin{proof}
(i) It follows from~\cite[Proposition 3]{Man1994}.
% For each $\alpha \in \bZ$, set $f(x) = p(q^\alpha x)$. 
% It is straightforward to see that~$f(x)$ is a polynomial in $\bK(q)[x]$ of positive degree with $f(0) \neq 0$.
% Moreover, the polynomial $f(x) = \si^{\alpha}(p(x))$. Suppose that $f(x)$ is reducible. 
% Then there exist $g(x), h(x) \in \bK(q)[x]$ of positive degree such that 
% \[
%        \si^{\alpha}(p(x)) = g(x) h(x).
% \]
% Applying $\si^{-\alpha}$ to both sides of the above equation, then we have 
% \[
%        p(x) = \si^{-\alpha}(g(x)) \si^{-\alpha}(h(x)).
% \]
% Since $\si^{-\alpha}(g(x))$ and $\si^{-\alpha}(h(x))$ have positive degree, 
% it follows that $p(x)$ is reducible, a contradiction.

(ii) Suppose that there exists $\alpha_0 \in \bZ \backslash \{ 0 \}$ such that 
$$\gcd(p(q^{\alpha_0} x), p(x)) \neq 1.$$
Since $p(x)$ is an irreducible polynomial in $\bK(q)[x]$, we have that $p(x) \mid p(q^{\alpha_0} x)$.
We may write 
\begin{equation} \label{EQ:poly}
 p(x) = c_d x^d + c_{d - 1} x^{d -1} + \cdots + c_0,
\end{equation}
where $c_i \in \bK(q)$, $0 \le i \le d$ with $d > 0$, and $c_0, c_d \neq 0$. Then 
\begin{equation} \label{EQ:shiftedpoly}
 p(q^{\alpha_0} x) = (c_d q^{d \alpha_0 }) x^d + (c_{d - 1} q^{(d -1) \alpha_0 }) x^{d - 1} + \cdots + c_0.
\end{equation}
Since $p(x) \mid p(q^{\alpha_0} x)$, we conclude from~\eqref{EQ:poly} and~\eqref{EQ:shiftedpoly} that
\[
 p(q^{\alpha_0} x) = q^{d \alpha_0} p(x).
\]
Comparing the constant coefficients of both sides in the above equation, it follows that 
\[
 c_0 q^{d \alpha_0} = c_0.
\]
Since $c_0 \neq 0$, we have that $q^{d \alpha_0} = 1$, a contradiction to the fact that $q$ is not a root of unity of $\bK$. 

(iii) Suppose that~\eqref{EQ:dispersionset} is an infinite set. 
Then there exists an irreducible factor $p(x)$ of $f(x)$ such that
\[
 \gcd(p(q^\alpha x), g(x)) \neq 1 \quad\text{for infinitely many $\alpha \in \bN$}.
\]
Since $g(x)$ only has finitely many distinct irreducible factors, it follows from (i) that 
\[
 \gcd(p(q^{\alpha_1} x), p(q^{\alpha_2} x)) \neq 1 \quad\text{for some } \alpha_1 \neq \alpha_2 \in \bN.
\]
Therefore, we have 
\[
 \gcd(p(q^{\alpha_1 - \alpha_2} x), p(x)) \neq 1,
\]
a contradiction to (ii).
\end{proof}

\end{lemma}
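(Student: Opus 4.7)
The plan is to prove the three claims in order, bootstrapping each from the previous.

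For part~(i), I would use the observation that the substitution $\phi_\alpha \colon x \mapsto q^\alpha x$ extends to a $\bK(q)$-algebra automorphism of $\bK(q)[x]$, with inverse $\phi_{-\alpha}$. Since ring automorphisms preserve irreducibility, degree, and the value at~$0$, the polynomial $p(q^\alpha x) = \phi_\alpha(p)$ is irreducible of the same positive degree $d$ and satisfies $p(q^\alpha \cdot 0) = p(0) \neq 0$. So in principle this part requires no external reference.

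For part~(ii), the strategy is to argue by contradiction. If $\gcd(p(q^{\alpha_0}x), p(x)) \neq 1$ for some $\alpha_0 \neq 0$, then~(i) tells us both polynomials are irreducible of the same degree $d>0$, so they must be associate, meaning $p(q^{\alpha_0}x) = c\, p(x)$ for some unit $c \in \bK(q) \setminus \{0\}$. Writing $p(x) = c_d x^d + \cdots + c_0$ with $c_0, c_d \neq 0$ and matching both the leading and the constant terms of the identity, one obtains $c = q^{d\alpha_0}$ from the top coefficient and $c = 1$ from the bottom, forcing $q^{d\alpha_0} = 1$. This contradicts the assumption that $q$ is transcendental over~$\bK$.

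For part~(iii), the idea is a pigeonhole argument on irreducible factors. The hypothesis $f(0) \neq 0$ guarantees that $x$ is not a factor of $f$, so every irreducible factor $p$ of $f$ satisfies $p(0)\neq 0$, putting us in the setting of~(i) and~(ii). If the set in~\eqref{EQ:dispersionset} were infinite, then since $f$ has only finitely many distinct irreducible factors, some single factor $p(x)$ of $f$ would satisfy $\gcd(p(q^\alpha x), g(x)) \neq 1$ for infinitely many $\alpha \in \bN$. Since each $p(q^\alpha x)$ is irreducible by~(i) and $g$ has only finitely many irreducible factors, two such shifts $p(q^{\alpha_1}x)$ and $p(q^{\alpha_2}x)$ with $\alpha_1 \neq \alpha_2$ must be associate to the same factor of $g$, hence associate to each other; applying $\phi_{-\alpha_2}$ then yields $\gcd(p(q^{\alpha_1-\alpha_2}x), p(x)) \neq 1$, contradicting~(ii).

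The step I expect to be the most delicate is not any single computation but rather keeping track of where the hypothesis $p(0) \neq 0$ (respectively $f(0) \neq 0$) is truly needed: it both supplies the constant term $c_0 \neq 0$ that forces $c = 1$ in~(ii), and rules out the pathological case where $x$ itself is a factor of $f$, whose shifts $q^\alpha x$ are all associate to $x$ and would make the dispersion set all of $\bN$. Once this bookkeeping is in place, the rest is a standard $q$-analog of the classical Abramov dispersion argument, with the transcendence of~$q$ playing the role that the torsion-freeness of~$\bZ$ plays in the shift case.
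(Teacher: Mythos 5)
Your proposal is correct and follows essentially the same route as the paper: the contradiction argument for (ii) via comparing leading and constant coefficients to force $q^{d\alpha_0}=1$, and the pigeonhole argument for (iii) reducing to (ii), are exactly the paper's proofs. The only difference is that for (i) you give a short self-contained argument via the automorphism $x\mapsto q^\alpha x$ where the paper simply cites Man--Wright, and you make explicit the (correct, and in the paper implicit) observation that $f(0)\neq 0$ ensures every irreducible factor $p$ of $f$ satisfies $p(0)\neq 0$.
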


Based on the above lemma and~\cite[Definition 1]{Man1994}, we give the following definition. 

\begin{defn} \label{DEF:qdispersion}
Let $f(x), g(x)$ be two polynomials in $\bK(q)[x]$ with $f(0) \neq 0$. 
The dispersion of $f(x)$ and $g(x)$ is given by 
\[
 \dis(f(x), g(x)) := \max\; \{ \alpha \mid \alpha \in \bN, \deg_x(\gcd(f(q^\alpha x) , g(x))) > 0 \} \cup \{0\}.
\]
\end{defn}

We include $0$ in the above definition in order to guarantee that the dispersion is always defined, 
even for constant polynomials.  
The dispersion in the $q$-case is the largest integer $q$-shift such that 
the greatest common divisor of the shifted polynomial and the unshifted one is nontrivial. 
Specifically, assume that $f(x)$ has the following factorization:
\[
f(x) = p_1^{e_1} \cdots p_m^{e_m},
\]
where $p_1,$ \ldots, $p_m \in \bK(q)[x] \setminus \bK(q)$ are irreducible and pairwise coprime. 
It is straightforward to see from Definition~\ref{DEF:qdispersion} that 
\[
  \dis(f(x), g(x)) = \max \{ \dis(p_i, g) \mid 1 \le i \le m \}.
\]
For example, the dispersion 
$$\dis((x + 1) (4 x + q), (q^2 x +1) (q^3 x + q + 1)) = 2,$$
because $\dis(x + 1, q^2 x + 1) = 2$. 

Similar to the shift case, the dispersion in the $q$-case can be computed by a resultant-based algorithm~\cite[Example 1]{Paule1999}. 
We have implemented it in Mathematica, but experiments suggest that it is inefficient in practice. 
 For instance, consider
\begin{align*}
 f(x) & =  5 (q x + 1) (x - 3 q) (x + 2) (x^3 - q x + 1) (2 q x^3 + 5), \\
 g(x) & = f(q^4 x).
\end{align*}
The polynomial $f$ has coefficients in $\bZ[q]$, and has degree $9$ in $x$. 
The dispersion of $f$ and $g$ is $4$. Below is a table for the timings (in seconds) for the computation of dispersion 
of $f$ and $g$ by the resultant-based (\textbf{res}) algorithm and the factorization-based (\textbf{fac}) algorithm, respectively. For this purpose,
the two polynomials were given in fully expanded form.
\begin{center}
\begin{tabular}{ c | c  }
\hline
System  & Mathematica  \\
\hline
res   &  43.6006 \\
fac &  0.011015  \\
\hline
\end{tabular}
\end{center}

Like~\cite{Man1994}, we also give an algorithm based on irreducible factorization over~$\bK[q][x]$. 

\begin{prop} \label{PROP:primitivedispersion}
Let $f(x)$ be a primitive polynomial in $\bK[q][x]$ of positive degree with respect to $x$, and $f(0) \neq 0$. 
Then for each $\alpha \in \bZ$, we have 
\begin{itemize}
 \item [(i)] $f(q^\alpha x) = q^e g(x)$, where $g(x)$ is a primitive polynomial in $\bK[q][x]$ with the same degree as $f(x)$, $g(0) \neq 0$ 
 and $e \in \bN$.  
 \item [(ii)] Let $f(x) = \sum_{i = 0}^d a_i x^i$ and $g(x) = \sum_{i = 0}^d b_i x^i$ be two polynomials 
 such that $f(q^\alpha x) = q^e g(x)$ for some $e \in \bN$. Then 
 \[
  q^{d \alpha} = \frac{a_0 b_d}{b_0 a_d}.
 \]
\end{itemize}
\end{prop}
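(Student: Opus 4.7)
The plan is to write $f(x) = \sum_{i=0}^{d} a_i x^i$ with $a_i \in \bK[q]$, $a_0, a_d \ne 0$, and $\gcd(a_0, \dots, a_d) = 1$ in $\bK[q]$, and then to analyse the substitution
\[
  f(q^\alpha x) \;=\; \sum_{i=0}^{d} a_i\, q^{i\alpha}\, x^i
\]
directly, separating the $q$-content from the rest.

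For part (i), the decisive observation is that the content of $f(q^\alpha x)$, viewed as an element of $\bK[q][x]$, must be a pure power of $q$. Indeed, let $p \in \bK[q]$ be an irreducible factor of this content. If $p$ were coprime to $q$, then from $p \mid a_i q^{i\alpha}$ one would infer $p \mid a_i$ for every $i$, violating the primitivity of $f$. Hence every irreducible factor of the content is associate to $q$, so there exists an integer $e$ (nonnegative when $\alpha \ge 0$, as required by the statement) with $g(x) := q^{-e} f(q^\alpha x) \in \bK[q][x]$ primitive. The remaining properties are then immediate: the leading coefficient of $g$ equals $a_d q^{d\alpha - e} \ne 0$, so $\deg_x g = d$, and the constant coefficient $a_0 q^{-e}$ is nonzero, so $g(0) \ne 0$.

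For part (ii), I would compare coefficients of $x^i$ in the equality $f(q^\alpha x) = q^e g(x)$, which yields $a_i q^{i\alpha} = q^e b_i$ for every $i$. Specialising at $i = 0$ gives $q^e = a_0/b_0$, and specialising at $i = d$ gives $q^e = a_d q^{d\alpha}/b_d$. Eliminating $q^e$ between the two relations produces at once the stated identity $q^{d\alpha} = a_0 b_d/(b_0 a_d)$.

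There is essentially no obstacle. The only step carrying any real content is the primitivity argument in (i): one must recognise that the substitution $x \mapsto q^\alpha x$ can inject only powers of $q$ into the coefficients and no other prime of $\bK[q]$, which is exactly what allows the single factor $q^e$ on the right-hand side to absorb the entire $\bK[q]$-content of $f(q^\alpha x)$.
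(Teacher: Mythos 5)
Your proof is correct and follows essentially the same route as the paper: expand $f(q^\alpha x)=\sum_{i} a_i q^{i\alpha}x^i$, note that primitivity of $f$ forces the $\bK[q]$-content of this polynomial to be a power of $q$, and for (ii) compare the constant and leading coefficients to eliminate $q^e$. Your parenthetical observation that $e\in\bN$ really requires $\alpha\ge 0$ is fair --- the paper's proof carries the same implicit restriction --- but it is harmless, since the dispersion algorithms only ever invoke the proposition with $\alpha\in\bN$.
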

\begin{proof}
(i) Assume that $f(x) = \sum_{i = 0}^d a_i x^i$ with $a_d, a_0 \neq 0$, $\gcd(a_d, \ldots, a_0) = 1$ in $\bK[q]$. 
Then
\begin{equation} \label{EQ:qshift}
f(q^\alpha x) = ( a_d q^{d \alpha}) x^d + (a_{d - 1} q^{(d - 1) \alpha}) x^{d - 1} + \cdots + a_0. 
\end{equation}
Since $\gcd(a_d, \ldots, a_0) = 1$ in $\bK[q]$, we have that
\[
 \gcd(a_d q^{d \alpha}, a_{d - 1} q^{(d - 1) \alpha}, \ldots, a_0) = q^e \ \ \text{ for some } e \in \bN.
\]
Thus, we can write $f(q^\alpha x) = q^e g(x)$, where $g(x)$ is a primitive polynomial in~$\bK[q][x]$ 
with the same degree as $f(x)$ and $g(0) \neq 0$.
 
(ii) Since $f(q^\alpha x) = q^e g(x)$, it follows from~\eqref{EQ:qshift} that
\begin{align*}
 \frac{a_0}{a_d q^{d \alpha}} & =  \frac{q^e b_0}{q^e b_d} \\
                              & =  \frac{b_0}{b_d}.
\end{align*}
Thus, we conclude that
\[
  q^{d \alpha} = \frac{a_0 b_d}{b_0 a_d}.
 \]
\end{proof}

Given $f(x), g(x) \in \bK(q)[x]$, we may further assume that $f(x), g(x)$ are two polynomials in $\bK[q][x]$ by 
clearing their denominators.
The above proposition gives a method to compute the dispersion of two primitive irreducible polynomials in $\bK[q][x]$. 
Below is the corresponding algorithm.

\begin{algo} \label{ALGO:irrdispersion}
Given two primitive irreducible polynomials $f, g \in \bK[q][x]$ of positive degrees with respect to $x$ and $f(0) \neq 0$. 
Compute $\dis(f, g)$.
\begin{enumerate}
 \item Compute $d_1 = \deg_x(f)$, $d_2 = \deg_x(g)$. If $d_1 \neq d_2$, then return $0$. 
 Otherwise, set $d = d_1$.
 \item Let $f = \sum_{i = 0}^d a_i x^i$ and $g = \sum_{i = 0}^d b_i x^i$. 
 If $\frac{a_0 b_d}{b_0 a_d}$ is not a nonnegative  power of $q^d$, then return $0$. 
 Otherwise, set $\alpha$ to be the natural number such that $q^{d \alpha} = \frac{a_0 b_d}{b_0 a_d}$.
 \item Compute $h = \frac{f(q^{\alpha} x)}{a_d q^{d \alpha}} - \frac{g(x)}{b_d}$. If $h$ is not the zero polynomial, return $0$.
 Otherwise, return $\alpha$.
\end{enumerate}
\end{algo}

The termination of the above algorithm is obvious. 
The correctness follows from Proposition~\ref{PROP:primitivedispersion}.

\begin{ex} \label{EX: irrdispersion}
Consider the following two primitive polynomials in $\bK[q][x]$:
\begin{align*}
  f(x) &= q x^2 - 1, \\
  g(x) &= q^5 x^2 - 1. 
\end{align*}
Using the above algorithm, we find that $\dis(f(x), g(x)) = 2$. 
\end{ex}

Using the irreducible factorization over $\bK[q][x]$, 
we derive the following algorithm to compute the dispersion of two arbitrary polynomials in $\bK[q][x]$:

\begin{algo} \label{ALGO:dispersion}
Given $f(x), g(x) \in \bK[q][x]$ with $f(0) \neq 0$, compute $\dis(f, g)$.
\begin{enumerate}
 \item{} [Initialize] If $\deg_x(f) < 1$ or $\deg_x(g) < 1$ then return $0$. Otherwise, set {\tt dispersion} $= 0$.
 \item{} [Factorization] Compute the set $\{f_i(x) \}$ and $\{g_j(x)\}$ of distinct primitive 
 irreducible factors over $\bK[q]$ of positive degree in $x$ for $f(x)$ and $g(x)$, respectively.
 \item For each pair $(f_i(x), g_j(x))$ of these factors, use Proposition~\ref{PROP:primitivedispersion} 
 to compute $\alpha = \dis(f_i(x), g_j(x))$. If $\alpha >$ {\tt dispersion}, then set {\tt dispersion}~$= \alpha$.
 \item Return {\tt dispersion}.
\end{enumerate}
\end{algo}

The termination of the above algorithm is obvious. The correctness follows from Definition~\ref{DEF:qdispersion} and 
Proposition~\ref{PROP:primitivedispersion}. It is implemented in Mathematica.

\begin{ex} \label{EX:dispersion}
Consider the following two polynomials in $\bK[q][x]$:
\begin{align*}
  f(x) &= (q x-1 ) (q x+1)(q x^2-1), \\
  g(x) &= q^9 x^7 (q^2 x-1) (q^2 x+1)(q^5 x^2-1). 
\end{align*}
They are already in factored form. 
Using the above algorithm, we find that 
$$\dis(f(x), g(x)) = 2.$$    
\end{ex}

\section{Desingularization in the \texorpdfstring{$q$}{q}-Weyl algebra} \label{SECT:qdesing}

We are now going to present algorithms for the $q$-Weyl closure
(Algorithm~\ref{ALGO:qWeylclosure}) and for the desingularization of a
$q$-difference operator (Algorithm~\ref{ALGO:desinop}). These algorithms are
analogs of those in~\cite{Zhang2016} and use Gr\"obner basis computations.
Hence, in practice, they are slower than algorithms based on linear
algebra~\cite{Chen2013,Chen2016} (see also Section~\ref{SECT:application}),
but their advantage is that also the degree with respect to~$q$ can be
taken into account---a feature that will be essential for the examples
presented in the next section.

In this section, we consider the desingularization for the leading coefficient
of a given $q$-difference operator. The trailing coefficient can be handled in
a similar way.  We summarize some terminologies given in~\cite{Chen2013,
  Chen2016, Zhang2016} by specializing the general Ore ring setting to the
$q$-Weyl algebra.
\begin{defn}\label{DEF:premovable}
Let $P \in \qpol$ with positive order, and~$p$ be a divisor of~$\lc_{\pa}(P)$ in~$\bK(q)[x]$.
\begin{itemize}
 \item[(i)] We say that $p$ is \emph{removable} from $P$ at
order $k$ if there exist~$Q \in \qrat$ with order~$k$, and~$w, v \in \bK(q)[x]$
with $\gcd(p, w) = 1$ in~$\bK(q)[x]$ such that
$$QP \in \qpol \quad \text{and} \quad \sigma^{-k}(\lc_{\pa}(QP)) = \frac{w}{vp}\lc_{\pa}(P).$$
We call $Q$ a \emph{$p$-removing operator for $P$ over $\bK(q)[x]$}, and $QP$ the corresponding \emph{$p$-removed operator}.
\item[(ii)] A polynomial $p \in \bK(q)[x]$ is simply called \emph{removable} from $P$ if it is removable at order $k$ for some $k \in \bN$. 
Otherwise, $p$ is called \emph{non-removable} from $P$.
\end{itemize}
\end{defn}

Note that every $p$-removed operator lies in $\cont(P)$.

\begin{ex} \label{EX:chyzak2010}
Consider the following $q$-difference operator~\cite[Example 4.9]{Chyzak2010} of order $1$ in $\qpol$:
\[
 P = q^2 x (q^2 - x) \pa - (1 - x) (1 - q x).
\]
Set 
\[
 Q = \frac{q^6}{x-1} \pa^2 + \frac{q^6+q^5-q^3-q^2}{x-1} \pa + \frac{q^5-q^3-q^2+1}{x-1}.
\]
Let $L = QP$. Then 
\[
\begin{array}{l@{\;}l@{\;}l}
 L & = & q^{12} x \pa^3 +  q^6(q^5 x + q^4 x + q^3 x - q x - x - 1) \pa^2 +{}\\[4pt]
   &   & (q-1) q^2 (q+1) (q^2+q+1) (q^3 x + q x - x-1) \pa +{} \\[4pt]
   &   & (q-1)^2 (q+1) (q^2+q+1) (qx-1),
\end{array}
\]
is a $(q^2 - x)$-removed operator for $P$ of order $3$.
\end{ex}

The following proposition provides a convenient form of $p$-removing operators over $\bK(q)[x]$. 
It is a special case of~\cite[Lemma 2.4]{Zhang2016} and also included in~\cite{Chen2013}. 
In Corollary~\ref{COR:xnonremovable}, we will use it to prove that $x$-removing operators do not exist.

\begin{prop}\label{PROP:premovable}
Let~$P \in \qpol$ be a $q$-difference operator with positive order. 
Assume that~$p \in \bK(q)[x]$ is removable from~$P$ at order~$k$.
Then there exists a~$p$-removing operator for~$P$ over~$\bK(q)[x]$ of the form
\[
  \frac{p_0}{\sigma^{k}(p)^{d_0}} + \frac{p_1}{\sigma^{k}(p)^{d_1}} \pa + \cdots
  + \frac{p_{k}}{\sigma^{k}(p)^{d_{k}}} \pa^{k},
\]
where $p_i$ belongs to $\bK(q)[x]$, $\gcd(p_i, \sigma^{k}(p)) = 1$ in $\bK(q)[x]$ or $p_i = 0$ for each $i = 0,$ $1,$ \ldots, $k$, and $d_k \geq 1$.
\end{prop}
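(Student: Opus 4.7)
The plan is to start with an arbitrary $p$-removing operator $Q \in \qrat$ of order $k$ (whose existence is guaranteed by the hypothesis) and to clear from its coefficients every denominator factor coprime to $\sigma^k(p)$, by left-multiplying $Q$ with a suitable polynomial in $\bK(q)[x]$. Write $Q = \sum_{i=0}^{k} q_i \pa^i$ with $q_i = n_i/d_i \in \bK(q,x)$ in lowest terms and $q_k \neq 0$. Decompose each denominator as $d_i = u_i v_i$, where $u_i$ collects the $\sigma^k(p)$-primary part and $\gcd(v_i, \sigma^k(p)) = 1$, and set $V := \lcm(v_0, \ldots, v_k) \in \bK(q)[x]$. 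The candidate operator is $\tilde Q := V\,Q$.

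First I would verify the claimed form. Since $V$ is a pure polynomial in $x$ sitting on the very left and since $V$ and the $q_i$ all live in the commutative ring $\bK(q,x)$, one obtains
\[
 \tilde Q \;=\; \sum_{i=0}^{k} \frac{(V/v_i)\,n_i}{u_i}\,\pa^i.
\]
Each $u_i$ is already a power of $\sigma^k(p)$, and $\gcd(V,\sigma^k(p))=1$ together with the original coprimality $\gcd(n_i,u_i)=1$ forces the new numerators to be coprime to $\sigma^k(p)$ (or zero). Thus $\tilde Q$ has the desired canonical shape, with appropriate exponents $d_i$ and polynomials $p_i$.

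Next I would check that $\tilde Q$ is still a $p$-removing operator for $P$ at order $k$. The inclusion $\tilde Q P = V(QP)\in\qpol$ is immediate since $QP\in\qpol$ and $V\in\bK(q)[x]$. Left-multiplication by $V$ merely scales $\lc_\pa(QP)$ by $V$, and since $\sigma^{-k}$ is a ring automorphism of $\bK(q)[x]$ preserving GCDs,
\[
 \sigma^{-k}\!\bigl(\lc_\pa(\tilde Q P)\bigr) \;=\; \frac{\sigma^{-k}(V)\,w}{v\,p}\,\lc_\pa(P),
\]
where $\gcd(V,\sigma^k(p))=1$ combined with $\gcd(w,p)=1$ yields $\gcd(\sigma^{-k}(V)w,\,p)=1$. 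This verifies Definition~\ref{DEF:premovable}(i) for $\tilde Q$ with the new data $\tilde w = \sigma^{-k}(V)w$ and $\tilde v = v$.

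Finally, for $d_k\ge 1$: by~\eqref{EQ:productlc} and the canonical form of $\tilde Q$, the same quantity also equals $\sigma^{-k}(p_k)\,\lc_\pa(P)/p^{d_k}$; comparing the two expressions gives $\sigma^{-k}(p_k)\,v\,p \,=\, \sigma^{-k}(V)\,w\,p^{d_k}$, and $d_k=0$ would force $p\mid\sigma^{-k}(V)w$, contradicting the coprimality just established (the degenerate case of $p$ a unit being trivial). The main obstacle is not conceptual but careful bookkeeping: transporting the GCD conditions across the automorphism $\sigma^{-k}$ and tracking how factors of $\sigma^k(p)$ migrate between numerator and denominator during the clearing step. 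Otherwise the argument is essentially the specialization of~\cite[Lemma 2.4]{Zhang2016} to the $q$-Weyl setting, with the commutation rule $\pa f(x) = f(qx)\pa$ replacing the shift/derivation counterparts.
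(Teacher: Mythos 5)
The paper itself offers no proof of this proposition (it defers to \cite[Lemma~2.4]{Zhang2016} and \cite{Chen2013}), so your argument is necessarily independent. Its overall shape---normalize a given removing operator so that only $\sigma^k(p)$ survives in the denominators, check that left-multiplication by a polynomial $V$ coprime to $\sigma^k(p)$ preserves the removing property, and read off $d_k\geq 1$ from the leading-coefficient formula \eqref{EQ:productlc}---is reasonable, and the last two of these three steps are carried out correctly.

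There is, however, a genuine gap at the step ``the new numerators are coprime to $\sigma^{k}(p)$.'' Left-multiplying by $V$ clears the parts of the \emph{denominators} coprime to $\sigma^{k}(p)$, but does nothing about factors of $\sigma^{k}(p)$ sitting in the \emph{numerators} $n_i$. The coprimality $\gcd(n_i,d_i)=1$ gives $\gcd(n_i,u_i)=1$, but when $u_i=1$ (e.g.\ when $q_i$ is a polynomial) $n_i$ may be divisible by $\sigma^{k}(p)$, and then $Vq_i$ cannot be written as $p_i/\sigma^{k}(p)^{d_i}$ with $d_i\geq 0$ and $\gcd(p_i,\sigma^{k}(p))=1$. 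This really occurs: if $Q$ is any $p$-removing operator of order $k\geq 1$, then $Q+\sigma^{k}(p)$ is again one (adding an element of $\qpol$ of order $0<k$ changes neither $QP\in\qpol$ nor $\lc_\pa(QP)$), and its order-zero coefficient defeats your construction. A second, smaller issue is the assertion that ``each $u_i$ is already a power of $\sigma^{k}(p)$,'' which is automatic only for $p$ a power of an irreducible; for reducible $p$ the $\sigma^{k}(p)$-part of a denominator can be, say, $\sigma^{k}(p_1)^{3}\sigma^{k}(p_2)$. The standard repair for both points is to \emph{discard} the harmless parts rather than clear them: split each $q_i=a_i+b_i$ by partial fractions, with $a_i$ the $\sigma^{k}(p)$-principal part (a proper fraction $p_i/\sigma^{k}(p)^{d_i}$ with $\gcd(p_i,\sigma^{k}(p))=1$, or $0$) and $b_i$ carrying the polynomial part and all denominators coprime to $\sigma^{k}(p)$. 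Setting $A=\sum_i a_i\pa^i$ and $B=\sum_i b_i\pa^i$, the coefficients of $AP=QP-BP$ have denominators that simultaneously divide a power of $\sigma^{k}(p)$ and are coprime to it, hence $AP\in\qpol$; moreover $a_k\neq0$ and $d_k\geq1$ because $q_k=\sigma^{k}\bigl(w/(vp)\bigr)$ keeps $\sigma^{k}(p)$ in its reduced denominator since $\gcd(w,p)=1$. That is the argument behind the cited \cite[Lemma~2.4]{Zhang2016}, and your write-up needs this (or an equivalent trimming of the offending numerator factors) to actually reach the stated normal form.
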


In~\cite[Lemma 4]{Chen2013}, the authors give an order bound for a $p$-removing operator in the shift case. 
We find that the proof also applies to the $q$-difference case 
provided that $p$ is an irreducible polynomial in $\bK(q)[x]$ and $p(0) \neq 0$. 
We summarize it in the following lemma.

\begin{lemma} \label{LEM:orderbound}
Let $P$ be a nonzero operator in $\qpol$ of positive order with trailing coefficient~$\ell_0$. 
Assume that $p$ is an irreducible factor of $\lc_{\pa}(P)$ such that $p(0) \neq 0$ and $p^k$ is 
removable from $P$ for some $k \ge 1$. 
% Set $n \in \bN$ be such that $\gcd(\si^n(p), \ell_0) \neq 1$ 
% and $\gcd(\si^n(p), \ell_0) = 1$ for all $m > n$. 
Then $p^k$ is removable from $P$ at order $\dis(p, \ell_0)$.
\end{lemma}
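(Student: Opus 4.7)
The plan is to adapt the argument used for the shift analog in [Chen 2013, Lemma 4] to the $q$-difference setting, replacing the ordinary shift by the $q$-dilation $\sigma$. Since left-multiplication by $\pa$ raises the order of any $p^k$-removing operator by one while preserving the removing property (it merely replaces $\sigma^{-k}(\lc_\pa(QP))$ by $\sigma^{-(k+1)}(\sigma(\lc_\pa(QP)))$, which is unchanged), it suffices to show that the \emph{minimum} order $m$ of a $p^k$-removing operator for $P$ satisfies $m \le \dis(p,\ell_0)$. The case $m = 0$ is trivial because the dispersion is nonnegative, so I assume $m \ge 1$.

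By Proposition~\ref{PROP:premovable} applied with removable polynomial $p^k$, pick a $p^k$-removing operator of order $m$ in canonical form
\[
  Q \;=\; \sum_{i=0}^{m} \frac{p_i}{\sigma^{m}(p)^{k d_i}}\,\pa^{i},
\]
with $d_m \ge 1$ and $\gcd(p_i,\sigma^{m}(p)) = 1$ whenever $p_i\neq 0$. Using the commutation rule $\pa f = \sigma(f)\pa$, the coefficient of $\pa^{s}$ in $QP$ is $[QP]_s = \sum_{i+j=s} Q_i\,\sigma^{i}(\ell_j)$. Let $i_0 := \min\{\,i : d_i \ge 1\,\}$; for $i < i_0$ the coefficient $Q_i = p_i$ is a polynomial. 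In the expansion of $[QP]_{i_0}$, every term with $i<i_0$ is polynomial, so the lone remaining term $p_{i_0}\sigma^{i_0}(\ell_0)/\sigma^{m}(p)^{k d_{i_0}}$ must itself be polynomial. Coprimality of $p_{i_0}$ with $\sigma^m(p)$, together with irreducibility of $p$, then forces $\sigma^{m-i_0}(p)\mid \ell_0$ in $\bK(q)[x]$.

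To convert this into the desired bound I need $i_0 = 0$, and for this I would invoke minimality of $m$. In the extreme subcase $i_0 = m$, analogous coefficient analyses at indices $m, m+1, \dots, m+r$ yield $\sigma^{m}(p)^{k d_m} \mid \sigma^{m}(\ell_j)$ for every $j$, hence $p^{k d_m}\mid \ell_j$ for all $j$. Thus $P = p^{k}\tilde P$ with $\tilde P\in \qpol$, making $1/p^{k}\in \qrat$ a $p^{k}$-removing operator of order $0$, contradicting $m\ge 1$. For the intermediate case $1 \le i_0 < m$, I would decompose $Q = Q_{\mathrm{poly}} + \hat Q\,\pa^{i_0}$ with $Q_{\mathrm{poly}}\in\qpol$ of order less than $i_0$ and $\hat Q\in\qrat$ of order $m-i_0$, note that $\hat Q\,\pa^{i_0}P = QP - Q_{\mathrm{poly}}P \in \qpol$, and then use this factored structure to extract a $p^{k}$-removing operator for $P$ of order strictly less than $m$, again contradicting minimality. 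Once $i_0 = 0$, we have $\sigma^{m}(p)\mid \ell_0$, so $\deg_x\gcd(p(q^{m}x),\ell_0) > 0$, and Definition~\ref{DEF:qdispersion} gives $m \le \dis(p,\ell_0)$.

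The main obstacle is this final minimality reduction in the intermediate range $1\le i_0 < m$: the $q$-dilation $\sigma$ acts multiplicatively on $x$ rather than additively, so one has to carry the coprimality conditions with respect to $\sigma^{m}(p)$ carefully through the decomposition $Q = Q_{\mathrm{poly}} + \hat Q\,\pa^{i_0}$ and the subsequent manipulation. The hypotheses $p(0)\neq 0$ (so that $\sigma$-shifts of $p$ never absorb a factor of $x$) and irreducibility of $p$ (so that $\gcd(p_i,\sigma^{m}(p))=1$ propagates to all relevant $q$-shifts) are precisely what make this reduction go through, mirroring the role these hypotheses already play in Lemma~\ref{LEM:finite}.
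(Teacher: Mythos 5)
Your proposal takes the same route as the paper, which for this lemma simply defers to the shift-case argument of Chen--Jaroschek--Kauers--Singer (the printed proof is literally the citation), and your reconstruction of that argument in the $q$-setting is sound: the reduction to the minimal order $m$ (left-multiplying a removing operator by $\pa$ preserves removability), the canonical form from Proposition~\ref{PROP:premovable}, the coefficient analysis at the lowest index $i_0$ where a denominator genuinely occurs, and the conclusion $\sigma^{m-i_0}(p)\mid\ell_0$ are all correct; here Lemma~\ref{LEM:finite}(i) guarantees that $\sigma^{m}(p)$ is again irreducible, which is what lets you pass from ``$p_{i_0}\sigma^{i_0}(\ell_0)/\sigma^{m}(p)^{kd_{i_0}}$ is a polynomial'' to the divisibility.

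The one step you leave open---the minimality reduction for $1\le i_0<m$---is not actually an obstacle; it closes in two lines. From $\hat Q\,\pa^{i_0}P\in\qpol$ and $\pa^{i_0}P=\sigma^{i_0}(P)\,\pa^{i_0}$ (with $\sigma^{i_0}$ applied coefficient-wise) you get $\hat Q\,\sigma^{i_0}(P)\in\qpol$. Since applying $\sigma^{c}$ coefficient-wise is a ring automorphism of $\qrat$ (it respects the product $AB=\sum_{i,j} a_i\sigma^{i}(b_j)\pa^{i+j}$), setting $Q':=\sigma^{-i_0}(\hat Q)$ gives $Q'P=\sigma^{-i_0}\bigl(\hat Q\,\sigma^{i_0}(P)\bigr)\in\qpol$, where $Q'$ has order $m-i_0$. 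Moreover $\lc_{\pa}(Q'P)=\sigma^{-i_0}(\lc_{\pa}(QP))$, because $Q_{\mathrm{poly}}P$ has order at most $i_0-1+r<m+r$ and hence does not affect the leading coefficient; therefore $\sigma^{-(m-i_0)}(\lc_{\pa}(Q'P))=\sigma^{-m}(\lc_{\pa}(QP))$, so $Q'$ is a $p^{k}$-removing operator for $P$ of order $m-i_0<m$, contradicting minimality. (Your separate case $i_0=m$ is just the instance $m-i_0=0$ of this.) With that supplied the proof is complete; the only remaining nitpick is that $i_0$ should be defined as the least $i$ with $Q_i\notin\bK(q)[x]$, i.e.\ with $p_i\neq0$ and $d_i\ge1$, since the canonical form permits $p_i=0$.
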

\begin{proof}
It is literally the same as~\cite[Lemma 4]{Chen2013}.
\end{proof}

Let $P = \sum_{i = 0}^r \ell_i \pa^i$ be a nonzero operator in $\qpol$ of positive order. 
We say that $P$ is $x$-primitive if $x \nmid \gcd(\ell_0, \ldots, \ell_r)$ in $\bK(q)[x]$. 
Gau\ss{'} lemma in the commutative case  also holds for $x$-primitive operators. 
The proof is similar to that of~\cite[Lemma 3.4.8]{Zhang2017}. 
Here, we give an independent proof. 

\begin{lemma} \label{LEM:Gauss}
Let $P$ and $Q$ be two operators in $\qpol$. 
If $P$ and $Q$ are $x$-primitive, so is $QP$.
\end{lemma}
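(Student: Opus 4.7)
My plan is to reduce the statement to the fact that the commutative polynomial ring $\bK(q)[\pa]$ is an integral domain, by constructing a ring homomorphism $\pi\colon \qpol \to \bK(q)[\pa]$ that evaluates the polynomial coefficients at $x=0$. Concretely, I want to write every operator $A \in \qpol$ uniquely as $A = \sum_i a_i(x) \pa^i$ and set $\pi(A) = \sum_i a_i(0)\,\pa^i$. Under this map, $P$ is $x$-primitive exactly when $\pi(P) \neq 0$, so it suffices to check that $\pi$ is multiplicative and that its codomain is a domain.

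The first step is to identify $\pi$ with the quotient by a two-sided ideal, so that multiplicativity is automatic. From the commutation rule $\pa x = qx\pa$ and the fact that $q$ is a unit, one gets $x\pa^i = q^{-i}\pa^i x$ for every $i \in \bN$, hence $x \cdot B \in \qpol \cdot x$ for each $B \in \qpol$. Thus $\qpol \cdot x$ is a two-sided ideal. A direct computation using $\pa^i x = q^i x \pa^i$ shows that $\qpol \cdot x$ consists of precisely those operators whose every $\pa^i$-coefficient lies in $x\,\bK(q)[x]$, so as an abelian group $\qpol/\qpol\cdot x \cong \bigoplus_i \bK(q)\,\pa^i$. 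Multiplication on the quotient is induced by $\pa \cdot \bar{x} = q\bar{x}\cdot \pa = 0$, which places no nontrivial relation on $\pa$; hence the quotient is the commutative polynomial ring $\bK(q)[\pa]$, and the projection agrees with the map $\pi$ described above.

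With this setup, the proof is immediate. Since $P$ is $x$-primitive, some coefficient $\ell_i$ has $\ell_i(0) \neq 0$, so $\pi(P) \neq 0$; likewise $\pi(Q) \neq 0$. Because $\bK(q)[\pa]$ is an integral domain, $\pi(QP) = \pi(Q)\pi(P)$ is also nonzero, which means that $QP$ has at least one coefficient not divisible by $x$, i.e., $QP$ is $x$-primitive. I do not expect any real obstacle; the only points requiring care are verifying that $\qpol\cdot x$ is two-sided (rather than only a one-sided ideal) and correctly identifying the multiplicative structure of the quotient, both of which follow cleanly from $\pa x = qx\pa$.
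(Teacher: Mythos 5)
Your proof is correct, but it takes a genuinely different route from the paper. The paper runs the classical Gauss-lemma argument directly: assuming $QP$ is not $x$-primitive, it picks the maximal indices $i_0, j_0$ with $x \nmid a_{i_0}$ and $x \nmid b_{j_0}$, inspects the coefficient $c_{i_0+j_0} = \sum_{i+j=i_0+j_0} a_i \si^i(b_j)$, and uses $\si^i(x)=q^i x$ to isolate the one summand $a_{i_0}\si^{i_0}(b_{j_0})$ not divisible by~$x$, reaching a contradiction. You instead package the whole argument into the observation that $\qpol\cdot x$ is a two-sided ideal (because $\si(x)=qx$ is an associate of $x$, so $x\pa^i = q^{-i}\pa^i x$) whose quotient is the integral domain $\bK(q)[\pa]$, and that $x$-primitivity of $A$ is exactly $\pi(A)\neq 0$ for the induced projection $\pi$; multiplicativity of $\pi$ can also be seen directly from $a_i\,\si^i(b_j)$ evaluated at $x=0$ giving $a_i(0)\,b_j(0)$, since $\si^i(b_j)(0)=b_j(q^i\cdot 0)=b_j(0)$. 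Your identification of $\qpol\cdot x$ with the set of operators all of whose coefficients lie in $x\,\bK(q)[x]$, and of the quotient with the commutative polynomial ring, both check out. The structural approach is arguably cleaner and makes transparent exactly which property of $x$ is used (stability of the ideal $(x)$ under $\si$, which is also the content of Corollary~\ref{COR:xnonremovable}'s reliance on this lemma); the paper's version is more elementary and self-contained, requiring no verification that an ideal is two-sided or that a quotient has the expected ring structure.
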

\begin{proof} 
Suppose that $QP$ is not $x$-primitive. 
We may write
\[
P =  \sum_{i = 0}^r a_i \pa^i, \ \ Q = \sum_{i = 0}^s b_i \pa^i \ \ \text{ and } \ \ QP = \sum_{i = 0}^{r + s} c_i \pa^i,
\]
where all coefficients $a_i,b_i,c_i$ are polynomials in $\bK(q)[x]$. 
By assumption, we have $x \mid \gcd(c_0, \ldots, c_{r + s})$. 
Since $P$ and $Q$ are $x$-primitive, 
there exists $0 \le i_0 \le r$ and $0 \le j_0 \le s$ such that~$x \nmid a_{i_0}$ and $x \nmid b_{j_0}$. 
We may further assume that $i_0$ and $j_0$ are maximal with this property. Consider
\begin{equation} \label{EQ:product}
 c_{i_0 + j_0} = \sum_{i + j = i_0 + j_0} a_i \si^{i}(b_j),
\end{equation}
By the maximality of $i_0$ and $j_0$, we have that $x \mid a_i$ and $x \mid b_j$ for $i > i_0$ and~$j > j_0$. 
Note that $x$ also divides $\si^{i}(b_j)$ for $j > j_0$ and $i = i_0 + j_0 - j$ because $\si^{i}(x) = q^i x$. 
Therefore, in the right side of equation~\eqref{EQ:product}, each summand is divisible by $x$ except $a_{i_0} \si^{i}(b_{j_0})$. 
By assumption, $x$ divides $c_{i_0 + j_0}$. Thus, $x$ divides $a_{i_0} \si^{i_0}(b_{j_0})$.
It implies that $x \mid a_{i_0}$ or $x \mid \si^{i_0}(b_{j_0})$. 
Since $x \nmid a_{i_0}$, we have that~$x \mid \si^{i_0}(b_{j_0})$. 
If follows that $x \mid \si^{-i_0} (\si^{i_0}(b_{j_0})) = b_{j_0}$, a contradiction.
\end{proof}

\begin{cor} \label{COR:xnonremovable}
Let $P$ be a nonzero operator in $\qpol$ of positive order. 
If~$x$ divides $\lc_{\pa}(P)$, then $x$ is non-removable from $P$.  
\end{cor}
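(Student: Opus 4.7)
The plan is to argue by contradiction, combining the explicit form of a removing operator from Proposition~\ref{PROP:premovable} with the Gauss-type statement of Lemma~\ref{LEM:Gauss}. Assume that $x$ is removable from $P$ at some order $k \in \bN$. Since $x$ is irreducible in $\bK(q)[x]$, Proposition~\ref{PROP:premovable} supplies an $x$-removing operator of the shape
\[
 Q = \sum_{i=0}^{k} \frac{p_i}{\sigma^{k}(x)^{d_i}}\,\pa^i,
\]
where $\sigma^{k}(x) = q^{k} x$, each $p_i \in \bK(q)[x]$ satisfies $\gcd(p_i, \sigma^{k}(x)) = 1$ or $p_i = 0$, and $d_k \geq 1$.

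The first step is to clear the common denominator of $Q$. Setting $D := \max_i d_i$ (which is at least $1$ because $d_k \geq 1$), I would consider
\[
 \tilde{Q} := \sigma^{k}(x)^{D}\,Q = \sum_{i=0}^{k} p_i\,\sigma^{k}(x)^{D-d_i}\,\pa^i \in \qpol.
\]
Then I would verify that $\tilde{Q}$ is $x$-primitive. For each index $i$ with $d_i = D$, the $\pa^i$-coefficient of $\tilde{Q}$ equals $p_i$, which by the coprimality condition (together with the fact that $q^{k}$ is a unit in $\bK(q)$) is not divisible by $x$. For the remaining indices, the $\pa^i$-coefficient contains the positive power $\sigma^{k}(x)^{D-d_i}$, hence is divisible by $x$. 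Thus $\tilde{Q}$ is $x$-primitive.

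The second step is to derive a contradiction from two incompatible conclusions about $\tilde{Q} P$. After reducing, without loss of generality, to the case where $P$ is itself $x$-primitive (a global $x$-content of $P$ being a trivially removable factor that is not what the corollary addresses), Lemma~\ref{LEM:Gauss} applied to the two $x$-primitive operators $\tilde{Q}$ and $P$ forces $\tilde{Q} P$ to be $x$-primitive. On the other hand,
\[
 \tilde{Q} P = \sigma^{k}(x)^{D}\,(QP) = (q^{k} x)^{D}\,(QP),
\]
and since $QP \in \qpol$ and $D \geq 1$, every coefficient of $\tilde{Q} P$ is divisible by $x^{D}$, which contradicts $x$-primitivity.

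The main obstacle is the construction of $\tilde{Q}$ as an $x$-primitive element of $\qpol$: one has to exploit both the specific shape of the denominators $\sigma^{k}(x)^{d_i}$ provided by Proposition~\ref{PROP:premovable} and the coprimality of each $p_i$ with $\sigma^{k}(x)$, so that after uniformly clearing denominators the coefficients at the indices realizing $d_i = D$ survive modulo $x$ while all the others vanish modulo $x$. The condition $d_k \geq 1$, ensuring $D \geq 1$, is precisely what makes $(q^{k} x)^{D}(QP)$ globally $x$-divisible and lets Lemma~\ref{LEM:Gauss} produce the contradiction.
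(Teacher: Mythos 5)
Your proof is correct and follows essentially the same route as the paper's: clear the denominators of the removing operator supplied by Proposition~\ref{PROP:premovable} to obtain an $x$-primitive operator in $\qpol$, and then contradict Lemma~\ref{LEM:Gauss} using the fact that $\sigma^k(x)^D (QP)$ has every coefficient divisible by $x$. The only (welcome) refinements are that you observe $\sigma^k(x)^D Q$ is itself already $x$-primitive, where the paper first factors out its content $w$, and that you make explicit the reduction to $x$-primitive $P$ that is needed before Lemma~\ref{LEM:Gauss} can be applied.
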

\begin{proof}
Suppose that~$x$ is removable from $P$. 
By Definition~\ref{DEF:premovable}, there exists an $x$-removing operator~$Q$ such that
$QP \in \qpol$.
By Proposition~\ref{PROP:premovable}, we can write
\[
 Q = \frac{p_0}{x^{d_0}} + \frac{p_1}{x^{d_1}} \pa + \cdots +
     \frac{p_{k}}{x^{d_{k}}} \pa^{k},
\]
where~$p_i \in \bK(q)[x]$, $\gcd(p_i, x) = 1$ in $\bK(q)[x]$, $i = 0, \ldots, k$ and $d_k \geq 1$. Let
$$d = \max_{0 \leq i \leq k} d_i \quad \text{ and } \quad Q_1 = x^d Q.$$
Then the content $w$ of $Q_1$  with respect to $\pa$ is $\gcd(p_0, \ldots, p_k)$
because 
$$\gcd(p_i, x) = 1 \ \ \text{ for each } \ \ i = 0, \ldots, k.$$
Let~$Q_1 = w Q_2.$ Then~$Q_2$ is the primitive part of~$Q_1$. In particular,~$Q_2$ is $x$-primitive.
Then
$$ w Q_2 P = x^d QP.$$
Since $\gcd(w, x) = 1$ and $QP \in \qpol$, we have that $x$ divides the content of~$Q_2P$ with respect to $\pa$.
It follows that $Q_2P$ is not $x$-primitive, a contradiction to Lemma~\ref{LEM:Gauss}.
\end{proof}

Next, we give the definition of desingularized operators in the $q$-case, which is a special case of~\cite[Definition 3.1]{Zhang2016}.

\begin{defn}\label{DEF:desingularization}
Let $P \in \qpol$ with order $r > 0$, and
\begin{equation} \label{EQ:factor}
\lc_{\pa}(P) = p_1^{e_1} \cdots p_m^{e_m},
\end{equation}
where $p_1,$ \ldots, $p_m \in \bK(q)[x] \setminus \bK(q)$ are irreducible and pairwise coprime.
An operator $L \in  \qpol \setminus \{ 0 \} $ of order $k$ is called a {\em desingularized operator for~$P$}
if $L \in \cont(P)$ and
\begin{equation} \label{EQ:dop}
\si^{r - k}(\lc_{\pa}(L)) = \frac{a}{b p_1^{k_1} \cdots p_m^{k_m}} \lc_{\pa}(P) ,
\end{equation}
 where $a, b \in \bK(q)$ with $b \neq 0$, and $p_i^{d_i}$ is non-removable from $P$ for each $d_i > k_i$, $i = 1,\ldots,m$.
\end{defn}

\begin{thm} \label{THM:orderbound}
Let $P$ be a nonzero operator in $\qpol$ of order $r > 0$. 
Assume that $\ell_r$ and $\ell_0$ are the leading and trailing coefficient of $P$, respectively. 
Set $\ell_r = x^e \tilde{\ell_r}$ for some $e \in \bN$ and $\tilde{\ell_r}(0) \neq 0$. 
Then there exists a desingularized operator of $P$ of order $r + \dis(\tilde{\ell_r}, \ell_0)$.
\end{thm}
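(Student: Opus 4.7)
The plan is to recast the theorem as a statement about the generator of a principal ideal of ``attainable leading coefficients,'' and then pin that generator down using the removability data. The advantage of this framing is that no ad hoc construction combining individual $p_i$-removing operators into one is needed: once the relevant ideal is principal, a single preimage of its generator realizes all local reductions simultaneously.

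To set up, factor $\ell_r = x^e p_1^{e_1}\cdots p_m^{e_m}$, where $p_1,\ldots,p_m \in \bK(q)[x]\setminus\bK(q)$ are the distinct irreducible factors of $\tilde\ell_r$ (so $p_i(0)\neq 0$). Let $N := \dis(\tilde\ell_r,\ell_0)$ and let $k_i \in \{0,\ldots,e_i\}$ be the largest integer for which $p_i^{k_i}$ is removable from $P$. Since $p_i \mid \tilde\ell_r$, $\dis(p_i,\ell_0) \le N$, so by Lemma~\ref{LEM:orderbound} each $p_i^{k_i}$-removing operator can be lifted to order exactly $N$ by left-multiplying with a suitable power of $\pa$.

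Now consider the $\bK(q)[x]$-module
\[
 U = \{Q \in \qrat : \deg_\pa(Q) \le N \text{ and } QP \in \qpol\}
\]
together with the map $\psi : U \to \bK(q)[x]$ sending $Q$ to the coefficient of $\pa^{r+N}$ in $QP$. Equation~\eqref{EQ:productlc} shows that $\psi$ is $\bK(q)[x]$-linear, so $I := \psi(U)$ is an ideal in the PID $\bK(q)[x]$, generated by some $g$. Since $\pa^N \in U$ with $\psi(\pa^N) = \sigma^N(\ell_r)$, $g$ divides $\sigma^N(\ell_r) = q^{Ne}x^e\prod_i\sigma^N(p_i)^{e_i}$; the irreducible factors $x,\sigma^N(p_1),\ldots,\sigma^N(p_m)$ are pairwise coprime thanks to $p_i(0)\neq 0$ and Lemma~\ref{LEM:finite}. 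Pushing each lifted $p_i^{k_i}$-removing operator through $\psi$ and noting that $\sigma^{N-\dis(p_i,\ell_0)}$ transports the $\sigma^{\dis(p_i,\ell_0)}(p_i)$-valuation to the $\sigma^N(p_i)$-valuation yields elements of $I$ of $\sigma^N(p_i)$-valuation exactly $e_i-k_i$, whence $v_{\sigma^N(p_i)}(g) \le e_i - k_i$. For the matching lower bound, take any $Q \in U$ with $\psi(Q)\neq 0$ and write $\sigma^{-N}(\lc_\pa(Q)) = a/b$ in lowest terms: the inclusion $QP \in \qpol$ forces $b \mid \ell_r$, so $b$ is (up to a unit) of the form $x^{f_0}\prod_i p_i^{f_i}$. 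Rewriting $\sigma^{-N}(\lc_\pa(QP))$ using this factorization exhibits $Q$ as an $x^{f_0}$-removing and $p_i^{f_i}$-removing operator in the sense of Definition~\ref{DEF:premovable}; Corollary~\ref{COR:xnonremovable} and the maximality of $k_i$ then force $f_0 = 0$ and $f_i \le k_i$, so $v_x(\psi(Q)) = e$ and $v_{\sigma^N(p_i)}(\psi(Q)) \ge e_i - k_i$. Applied to $g \in I$ itself, the two-sided bounds pin it down to $g = c\,x^e\prod_i\sigma^N(p_i)^{e_i-k_i}$ for some $c \in \bK(q)^\ast$. Any preimage $Q \in U$ with $\psi(Q) = g$ then supplies $L := QP$ of order exactly $r + N$ satisfying equation~\eqref{EQ:dop} with $(a,b) = (cq^{-Ne},1)$.

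The subtle step is the lower bound: the mere hypothesis $QP \in \qpol$ says nothing a priori about which primes $Q$ is capable of removing, so I must first extract that per-prime information by carefully factoring the rational function $\sigma^{-N}(\lc_\pa(Q))$ and matching it against Definition~\ref{DEF:premovable}. Once this translation is in place, the PID structure of $\bK(q)[x]$ automatically bundles the individual $p_i$-removabilities together with the non-removability of $x$ into a single operator of the exact leading-coefficient shape demanded by Definition~\ref{DEF:desingularization}, at the target order $r + N$.
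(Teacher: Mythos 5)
Your proposal is correct. It coincides with the paper's proof in its first half: both arguments fix, for each irreducible factor $p_i$ of $\tilde{\ell_r}$, the maximal exponent $k_i$ with $p_i^{k_i}$ removable, invoke Lemma~\ref{LEM:orderbound} to place that removal at order $\dis(p_i,\ell_0)\le N:=\dis(\tilde{\ell_r},\ell_0)$, and invoke Corollary~\ref{COR:xnonremovable} to rule out removing the factor $x^e$. The difference is in the combination step. The paper simply cites \cite[Lemma~4]{Chen2016} to merge the individual removals into one desingularized operator of order $r+N$; you instead prove this merging from scratch by introducing the ideal $I=\psi(U)\subseteq\bK(q)[x]$ of attainable $\pa^{r+N}$-coefficients and computing the valuation of its generator $g$ at $x$ and at each $\sigma^N(p_i)$ (upper bounds from the lifted removing operators, lower bounds from the ``every element of $U$ is itself a removing operator for whatever its denominator divides out of $\ell_r$'' translation). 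This is in effect a self-contained proof of the cited lemma in the $q$-Weyl setting, and it mirrors the coefficient-ideal machinery ($I_k$) that the paper deploys later in Section~4 for Algorithm~\ref{ALGO:desinop}; what it buys is independence from the external reference and an explicit identification of the minimal attainable leading coefficient $g=c\,x^e\prod_i\sigma^N(p_i)^{e_i-k_i}$, rather than mere existence. Two cosmetic points: the equality $v_x(\psi(Q))=e$ should be the inequality $v_x(\psi(Q))\ge e$ (equality can fail if $x\mid a$), but $v_x(g)=e$ still follows since $g\mid\sigma^N(\ell_r)$; and when matching Definition~\ref{DEF:desingularization} one should note that for $e\ge1$ the factor $x$ is itself one of the irreducible factors of $\lc_{\pa}(P)$ in \eqref{EQ:factor}, with associated exponent $k=0$ by Corollary~\ref{COR:xnonremovable}, which is exactly what your $g$ delivers.
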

\begin{proof}
Assume that $\tilde{\ell_r} = p_1^{e_1} \cdots p_m^{e_m}$, 
where $p_1,$ \ldots, $p_m \in \bK(q)[x] \setminus \bK(q)$ 
are irreducible, pairwise coprime. 
For each $i \in \{1, \ldots, m \}$, let $k_i$ be the natural number 
such that $p_i^{k_i}$ is removable from $P$, but $p_i^{d_i}$ is non-removable from $P$ for each $d_i > k_i$. 
It follows from Lemma~\ref{LEM:orderbound} that $p_i^{k_i}$ is removable from $P$ at order $\dis(p_i, \ell_0)$. 
On the other hand, if $e \ge 1$, then it follows from Corollary~\ref{COR:xnonremovable} that~$x^d$ is non-removable from $P$ 
for each $1 \le d \le e$.
Above all, we conclude from~\cite[Lemma 4]{Chen2016} that there exists a desingularized operator of $P$ of order
\[
 r + \max \{ \dis(p_i, \ell_0) \mid 1 \le i \le m \}, 
\]
which is equal to $r + \dis(\tilde{\ell_r}, \ell_0)$.
\end{proof}

\begin{ex} \label{EX:orderbound}
Consider the $q$-difference operator from Example~\ref{EX:chyzak2010}:
\[
 P = q^2 x (q^2 - x) \pa - (1 - x) (1 - q x).
\]
By the above theorem, we find that $P$ has a desingularized operator of order
\[
 1 + \dis(q^2 (q^2 - x), (1 - x) (1 - q x)) = 4.
\]
Actually, a desingularized operator of $P$ with minimal order is $L$ as specified in Example~\ref{EX:chyzak2010}, 
which is of order 3.
% \[
% \begin{array}{lll}
%  L & = & q^{12} x \pa^3 +  q^6(-1 - x - q x + q^3 x + q^4 x + q^5 x) \pa^2 +\\
%    &   & (-1 + q) q^2 (1 + q) (1 + q + q^2) (-1 - x + q x + q^3 x) \pa + \\ 
%    &   & (-1 + q)^2 (1 + q) (1 + q + q^2) (-1 + qx),
% \end{array}
% \]
% which is order 3.
\end{ex}

In the above example, the order bound given by Theorem~\ref{THM:orderbound} is overshooting. 
However, we will see in the next section that it is tight in all examples from knot theory
that we looked at.

The first application of Theorem~\ref{THM:orderbound} is to derive an algorithm for 
computing the first $q$-Weyl closure of a $q$-difference operator. 

Let $P$ be a nonzero operator in $\qpol$ of order $r > 0$. 
For each $k \ge r$, we set 
\[
  M_k(P) = \{T \in \cont(P) \mid \deg_{\pa}(T) \le k \}.
\]
It is straightforward 
to see that $M_k(P)$ is a finitely generated left $\bK(q)[x]$-submodule of $\cont(P)$. 
We call it the \emph{$k$-th submodule} of $\cont(P)$. 
If the operator $P$ is clear from the context, then we denote $M_k(P)$ simply by $M_k$. 
A generating set of $M_k$ can be derived by a syzygy computation over $\bK(q)[x]$~\cite[Section 3.3.2]{Zhang2017}. 

\begin{algo} \label{ALGO:qWeylclosure}
Given a $q$-difference operator $P \in \qpol$ of positive order. 
Compute a generating set of the $q$-Weyl closure of $P$.
\begin{enumerate}
 \item Derive an order bound $k$ for a desingularized operator of $P$ by using Theorem~\ref{THM:orderbound}.
 \item Compute a generating set $S$ of $M_k$ by using Gr\"obner bases~\cite[Section 3.3.2]{Zhang2017}.
 \item Return $S$.
\end{enumerate}
\end{algo}

The termination of the above algorithm is obvious. 
The correctness follows from~\cite[Theorem 3.2.3, Corollary 3.2.4]{Zhang2017}.

\begin{ex} \label{EX:qWeylclosure}
Consider the $q$-difference operator in Example~\ref{EX:chyzak2010}:
\[
 P = q^2 x (q^2 - x) \pa - (1 - x) (1 - q x).
\]
From Example~\ref{EX:orderbound}, we know that an order bound for a desingularized operator of $P$ is $4$. 
Using Gr\"obner bases, we can find a generating set of $M_4$. 
Since the size for the generating set of $M_4$ is large, we do not display it here. 
Instead, it follows from Example~\ref{EX:orderbound} that $P$ has a desingularized operator with order~3. 
By~\cite[Theorem 3.2.3, Corollary 3.2.4]{Zhang2017}, the $q$-Weyl closure of $P$ is also generated by~$M_3$. 
Through computation, we find that  $M_3$ is generated by $\{P, L\}$, 
where~$L$ is specified in Example~\ref{EX:chyzak2010}.
\end{ex}

The second application of Theorem~\ref{THM:orderbound} is to give an algorithm 
for computing a desingularized operator of a given $q$-difference operator. 

Let $P$ be a nonzero operator in $\qpol$ of order $r > 0$. 
For each $k \ge r$, let
\[
   I_k = \left\{ [\pa^k] T \mid T \in M_k(P) \right\} , 
\]
where~$[\pa^k] T$ denotes the coefficient of $\pa^k$ in $T$. 
It is straightforward to see that~$I_k$ is an ideal of $\bK(q)[x]$.
We call $I_k$ the \emph{$k$-th coefficient ideal} of $\cont(P)$. 
By~\cite[Lemma 3.3.3]{Zhang2017}, we can compute a generating set of $I_k$ 
if a generating set of $M_k$ is given. 

Assume that $k$ is an order bound for desingularized operators of $P$.
From~\cite[Theorem 3.3.6]{Zhang2017}, an element in $ I_k \setminus \{ 0 \}$ with minimal degree in $x$ will give rise to 
a desingularized operator of $P$. 
In~\cite[Remark 3.3.7]{Zhang2017}, 
the author describes how to use the Euclidean algorithm over $\bK(q)[x]$ to find an element $s$ in $ I_k \setminus \{ 0 \}$  with minimal degree in $x$. 
However, this will in general introduce a polynomial in~$\bK[q]$ when we clear the denominators in $s$. 
In the next section, we will need to find desingularized operators of some $q$-difference operators from knot theory, 
whose leading coefficient is of the form $q^a x^b$, where $a, b \in \bN$. Thus, we shall also minimize the degree of $q$ 
among leading coefficients of desingularized operators of a given $q$-difference operator.  
 Assume that $B \subset \bK[q][x]$ is a generating set of $I_k$.
Next, we give a method that finds an element in $\bK[q][x]$ of $I_k \setminus \{ 0 \}$ with minimal degree in $x$, 
which also has minimal degree in $q$ among nonzero elements of $\langle B \rangle$ in~$\bK[q][x]$ 
with minimal degree in $x$. 

\begin{prop} \label{PROP:gbmin}
Let $B \subset \bK[q][x]$ be a generating set of $I_k$. 
Assume that $G$ is a reduced Gr\"{o}bner basis of the ideal generated by $B$ over $\bK[q][x]$ with respect to the lexicographic order~$q \prec x$. 
Set $g$ to be the element in $G$ with minimal degree in $x$. Then $g$ is also an element in $I_k$ with minimal degree in $x$. 
\end{prop}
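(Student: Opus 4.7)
The plan is to leverage the divisibility property of leading monomials in a lexicographic Gr\"obner basis. With the order $q \prec x$, the leading monomial of any nonzero $f \in \bK[q][x]$ has the form $x^{\deg_x(f)} q^e$, where $e$ is the $q$-degree of the $x$-leading coefficient of $f$. Crucially, the $x$-degree of $f$ can be read off from the $x$-exponent of $\lm(f)$, which is the reason the chosen order is the right one for this problem.

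First I would reconcile the two ambient rings. The ideal $I_k$ lives in $\bK(q)[x]$, whereas $B \subset \bK[q][x]$ and $G$ is a Gr\"obner basis of $J := \langle B \rangle$ in $\bK[q][x]$. For any nonzero $f \in I_k$, write $f = \sum_i h_i b_i$ with $h_i \in \bK(q)[x]$; clearing a common denominator $c \in \bK[q] \setminus \{0\}$ yields $cf \in J$ with $\deg_x(cf) = \deg_x(f)$. Conversely, $J \subset I_k$. Hence
\[
  m := \min \{ \deg_x(f) \mid 0 \neq f \in I_k \} = \min \{ \deg_x(f) \mid 0 \neq f \in J \}.
\]

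Next, set $a^\ast = \min \{ \deg_x(g_i) \mid g_i \in G \}$. Since $G \subset J \subset I_k$, clearly $m \le a^\ast$. For the reverse inequality, pick any nonzero $f \in J$ with $\deg_x(f) = m$ and write $\lm(f) = x^m q^e$. By the Gr\"obner basis property, there exists $g_i \in G$ with $\lm(g_i) \mid \lm(f)$; writing $\lm(g_i) = x^{a_i} q^{b_i}$ gives $a_i \le m$, so $a^\ast \le a_i \le m$. Combining both bounds yields $a^\ast = m$, and the element $g \in G$ realizing this minimum satisfies $g \in I_k$ with $\deg_x(g) = m$, as claimed.

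No step poses a real obstacle: the whole argument is a direct application of the standard leading-monomial divisibility property of Gr\"obner bases, packaged with the elementary observation that clearing denominators preserves the $x$-degree and transfers membership between $I_k$ and $J$. The only subtlety worth flagging is that the choice of lex order with $q \prec x$ is precisely what makes the $x$-degree of a polynomial equal to the $x$-exponent of its leading monomial; with the opposite order this translation would fail.
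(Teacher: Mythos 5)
Your proof is correct and follows essentially the same route as the paper's: clear denominators to pass from $I_k \subset \bK(q)[x]$ to the ideal $\langle B\rangle \subset \bK[q][x]$, then invoke the leading-monomial divisibility property of the lex Gr\"obner basis with $q \prec x$ to bound the $x$-degree of some $g_i \in G$ by the minimal $x$-degree in $I_k$. Your write-up is in fact slightly more complete, since you also make explicit the trivial reverse inequality $m \le a^\ast$ (i.e., that $g \in \langle B\rangle \subseteq I_k$), which the paper leaves implicit.
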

\begin{proof}
Assume that $f \in \bK(q)[x]$ is an element in $I_k$ with minimal degree in $x$. 
Since $B = \{b_1, \ldots, b_\ell \}$ is a generating set of $I_k$, we have 
\[
 f = c_1 b_1 + \ldots + c_\ell b_\ell,
\]
where $c_1, \ldots, c_\ell \in \bK(q)[x]$. 
By clearing denominators in the above equation, it follows that
\[
 \tilde{f} = \tilde{c}_1 b_1 + \ldots + \tilde{c}_\ell b_\ell,
\]
where $\tilde{f} = c f$ and $c, \tilde{c}_i \in \bK[q]$, $i = 1, \ldots, \ell$. 
Since $G$ is Gr\"{o}bner basis of the ideal generated by $B$ over $\bK[q][x]$,  the head term
%~\cite[section 2.3]{Zhang2017} 
of $\tilde{f}$ is divisible by~$g_i$ for some $g_i \in G$. 
By the choice of the term order, it is straightforward to see that $\deg_x(g_i) \le \deg_x(\tilde{f})$.
On the other hand, the degree of $f$ in $x$ is equal to that of~$\tilde{f}$. 
Thus, $\deg_x(g_i) \le \deg_x(f)$. 
Since $g$ is the element in $G$ with minimal degree in $x$, we have 
\[
 \deg_x(g) \le \deg_x(g_i) \le \deg_x(f).
\]
\end{proof}

\begin{algo} \label{ALGO:desinop}
Given a $q$-difference operator $P \in \qpol$ of positive order. 
Compute a desingularized operator of $P$.
\begin{enumerate}
 \item Derive an order bound $k$ for a desingularized operator of $P$ by using Theorem~\ref{THM:orderbound}.
 \item Compute a generating set $S$ of $M_k$ by using Gr\"obner bases~\cite[Section 3.3.2]{Zhang2017}.
 \item Compute a generating set in $\bK[q][x]$ of $I_k$ by using~\cite[Lemma 3.3.3]{Zhang2017}.
 \item Compute an element $g \in \bK[q][x]$ of $I_k$ with minimal degree in $x$ by using Proposition~\ref{PROP:gbmin}.
 \item Tracing back to the computation of steps (3) and (4), 
 one can find a~$q$-difference operator $L \in \bK[q][x, \pa]$ of $\cont(P)$ 
 such that $\lc_{\pa}(L) = g$. Output~$L$.
\end{enumerate}
\end{algo}

The termination of the above algorithm is evident. 
The correctness follows from~\cite[Theorem 3.3.6]{Zhang2017}. 

\begin{ex} \label{EX:desinop}
Consider the $q$-difference operator in Example~\ref{EX:chyzak2010}:
\[
 P = q^2 x (q^2 - x) \pa - (1 - x) (1 - q x).
\]
\begin{enumerate}
\item By Example~\ref{EX:orderbound}, we know that the minimal order for a
  desingularized operator of $P$ is~ $3$.
\item Using Gr\"obner bases, we can find a generating set of~$M_3$. Since the
  size for the generating set of $M_3$ is large, we do not display it here.
\item By~\cite[Lemma 3.3.3]{Zhang2017}, we find that $I_3$
\footnote{ By computation, we also find that $I_4 = \langle q^{18} x \rangle$. 
This is not a contradiction because $I_4 = \sigma(I_3)$ in $\bK(q)[x]$. 
% So, there is indeed a desingularized operator of order $4$ with a desingularized operator with leading coefficient $q^{13} x$.  
% However, this is not a contraction because the Gr\"{o}bner basis computation only guarantees that 
% in the leading coefficient of a desingularized operator, the degree of $q$ is minimal 
% among the ideal of the generators of $I_4$ in $\bK[q, x]$ with minimal degree in $x$. 
}
is generated by~$q^{12}x$.
\item It is straightforward to see that $q^{12} x$ is the element in $I_3$
  with minimal degree in $x$.
\item Tracing back to the computation of steps (3) and (4), we find a
  $q$-difference operator $L \in \bK[q][x, \pa]$ of $\cont(P)$, which is exactly the operator 
  in Example~\ref{EX:chyzak2010}.
%   of
%   the following form:
%   \[
%     L = (q^{18} x) \pa^4 + p_3 \pa^3 + p_2 \pa^2 + p_1 \pa + p_0,
%   \]
%   where $p_i \in \bK[q][x]$, $0 \le i \le 3$.
\end{enumerate}
\end{ex}
% By clearing denominators of $L$, we may further assume that $L \in \bK[q][x][\pa]$. 
% In applications, we also require the degree of $\lc_{\pa}(L)$ with respect to $q$ is also minimal. 
% Let $B \in \bK[q][x]$ be a generating set of $I_k$. 
% % In~\cite[Remark 3.3.7]{Zhang2017}, the author explains how to use Euclidean algorithm over $\bK(q)[x]$ to find 
% % an element in $I_k$ with minimal degree. 
% % However, this will in general introduce a big content in $\bK[q]$ when we clear its denominators. 
% Instead, we can compute a reduced Gr\"{o}bner basis $G$ of $B$ over $\bK[q][x]$ with respect to the lexicographic order $q \prec x$. 
% Let $g$ be the element in $G$ with minimal degree in $x$. 
% It is straightforward to see that $g$ is an element in $I_k$ that with minimal degree in $x$.

\section{Application to knot theory} \label{SECT:application}

In the past years, $q$-difference equations arose naturally in quantum
topology and knot theory.  During the quest for better and better knot
invariants---the ideal invariant would allow to distinguish all knots---the
so-called \emph{colored Jones polynomial} was discovered. The name
\emph{polynomial} is somewhat misleading, as this invariant consists actually
of an infinite sequence of rational functions in $\bQ(q)$ or Laurent
polynomials in $\bQ[q,q^{-1}]$. For the precise definition of the colored
Jones polynomial we refer to~\cite{GaroufalidisLe05}, where it is proven that
for each knot this infinite sequence satisfies a linear $q$-difference
equation with polynomial coefficients, i.e., that the colored Jones polynomial
is always a $q$-holonomic sequence. The same author formulated the following
conjecture.
\begin{conj}[\cite{Garoufalidis18}]\label{conj.Stavros}
Let $J_{K}(n)\in\bQ(q)$ denote the Jones polynomial of a knot~$K$, colored by
the $n$-dimensional irreducible representation of $\mathfrak{sl}_2$ and
normalized by $J_{\mathrm{Unknot}}(n)=1$. Then for the colored Jones polynomial,
i.e., for the sequence $\bigl(J_K(n)\bigr)_{n\in\bN}$ the following holds:
\begin{enumerate}
\item $(1-q^n)J_{K}(n)$ satisfies a bimonic recurrence relation,
\item $J_{K}(n)$ does not satisfy a monic recurrence relation.
\end{enumerate}
\end{conj}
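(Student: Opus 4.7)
Since Conjecture~\ref{conj.Stavros} is attributed to Garoufalidis and appears to still be open in full generality, I would not aim at a universal proof, but rather at a computational verification strategy for individual knots~$K$, leveraging the desingularization machinery just developed. The plan has two symmetric parts, mirroring the two clauses of the conjecture.

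For part~(1), I would first compute an annihilating operator $P \in \qpol$ for the colored Jones sequence $J_K(n)$ using an existing package such as the one from~\cite{GaroufalidisKoutschan12a}. From $P$, a straightforward operator-level substitution yields an annihilator $P'$ of the shifted sequence $(1-x) J_K(n)$, where $x = q^n$. I would then apply Algorithm~\ref{ALGO:desinop} to $P'$, first for the leading coefficient and then, via the symmetric procedure for the trailing coefficient alluded to at the beginning of Section~\ref{SECT:qdesing}, for the trailing coefficient. A successful run produces an operator $L \in \cont(P')$ whose leading and trailing coefficients both lie in $\bK(q) \setminus \{0\}$, i.e., a bimonic recurrence certifying part~(1) for this particular~$K$.

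For part~(2), I would apply Algorithm~\ref{ALGO:desinop} directly to~$P$ and inspect the leading coefficient~$g$ of the resulting desingularized operator. By Proposition~\ref{PROP:gbmin}, $g$ has minimal degree in~$x$ and, among such, minimal degree in~$q$ over the entire $q$-Weyl closure of~$P$. If $g$ retains a factor that is not a unit in $\bK(q)$, this certifies that no monic recurrence for $J_K(n)$ lives in $\cont(P)$, establishing part~(2) for~$K$. As a by-product, if $g$ turns out to be a pure monomial $c\, q^a x^b$, then every denominator appearing among the values $J_K(n)$ must divide a power of~$q$, i.e., $J_K(n) \in \bQ[q, q^{-1}]$ for all~$n$, yielding the Laurent-polynomiality claim advertised in the abstract.

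The main obstacle will be scalability. The order bound of Theorem~\ref{THM:orderbound} can be rather loose, as already observed in Example~\ref{EX:orderbound}, and the Gr\"obner basis computation underlying step~(2) of Algorithm~\ref{ALGO:desinop} quickly becomes prohibitive as the order of $P$ and the $x$- and $q$-degrees of its coefficients grow. For knots of moderate complexity this is already delicate; for larger knots one would need either smarter linear-algebra variants of the desingularization step able to cope with the $q$-degree constraint (see the discussion opening Section~\ref{SECT:qdesing}), or an a priori tightening of the order bound using knot-theoretic input. I would therefore accompany each verified instance with a table of timings, iterate the submodule search through orders between $r$ and $r + \dis(\tilde{\ell_r}, \ell_0)$ rather than jumping straight to the upper bound, and state clearly which families of knots lie within current reach.
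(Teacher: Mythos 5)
This statement is a conjecture that the paper does not prove in general; like you, it only verifies instances computationally, and your strategy coincides with the paper's: desingularize both the leading and trailing coefficients of the operator annihilating $(1-q^n)J_K(n)$ to confirm part~(1) for specific knots (Example~\ref{EX:knots}), and show that the unnormalized operator is not completely desingularizable to confirm part~(2) (Example~\ref{EX:knots2}). The scalability workaround you anticipate is exactly the paper's ansatz-based Algorithm~\ref{ALGO:guess}, which guesses the cofactors $u_i$ of a bimonic linear combination instead of running Gr\"obner bases.
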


Here, the notion \emph{bimonic} refers to the property that both the leading
and the trailing coefficient are monic (in the sense of
Corollary~\ref{COR:xnonremovable}, i.e., of the form $q^{an+b}$).  Using
desingularization, we can construct such bimonic recurrences, thereby
confirming part (1) of the conjecture in some particular instances.  This
shows that the colored Jones polynomial is actually a sequence of Laurent
polynomials, even when the sequence is extended to the negative integers, by
applying the recurrence into the other direction.  The knot-theoretic
interpretation of this phenomenon is that the substitution $q\to q^{-1}$
corresponds to reversing the orientation of the knot.

% Mathematica code for generating these pictures is in
% Forschung/Konferenzen/pics/knots/Trefoil.nb
\begin{figure}
  \begin{center}
    \includegraphics[height=120pt]{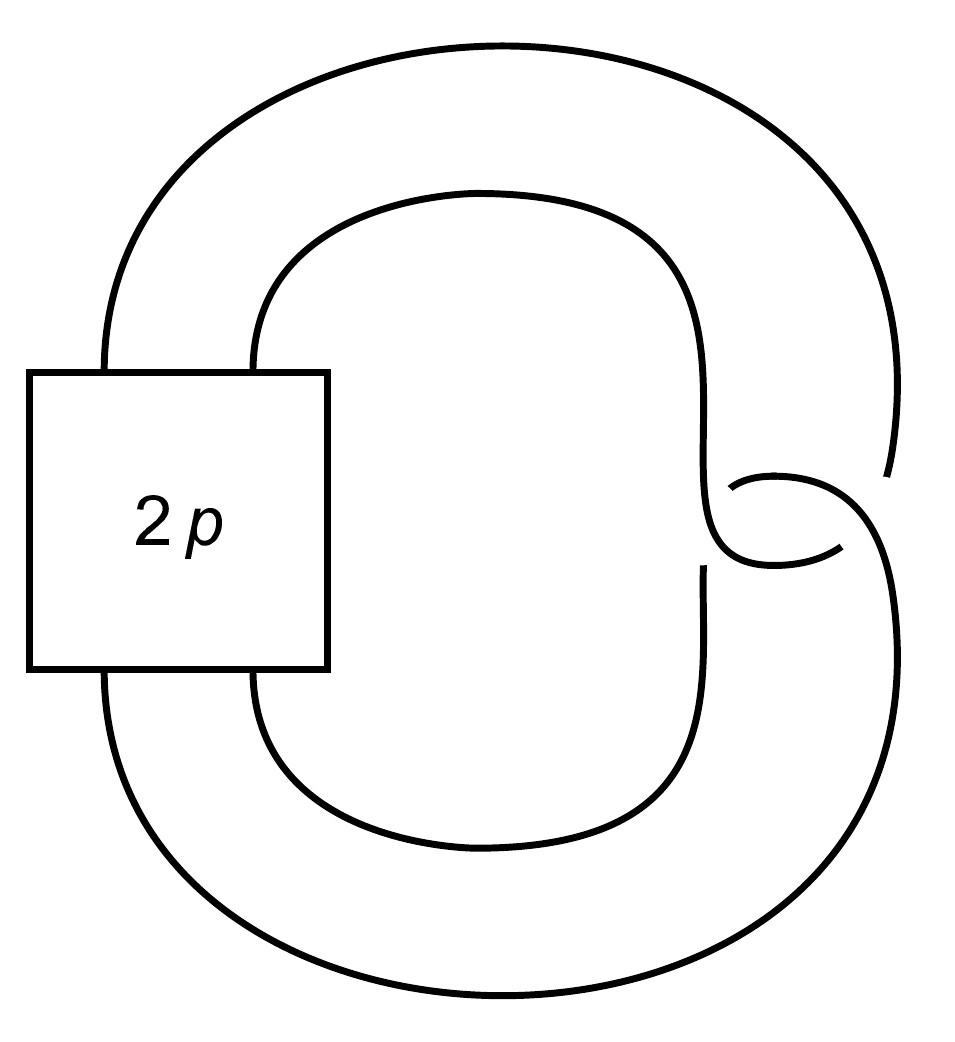}
    \qquad\qquad
    \includegraphics[height=120pt]{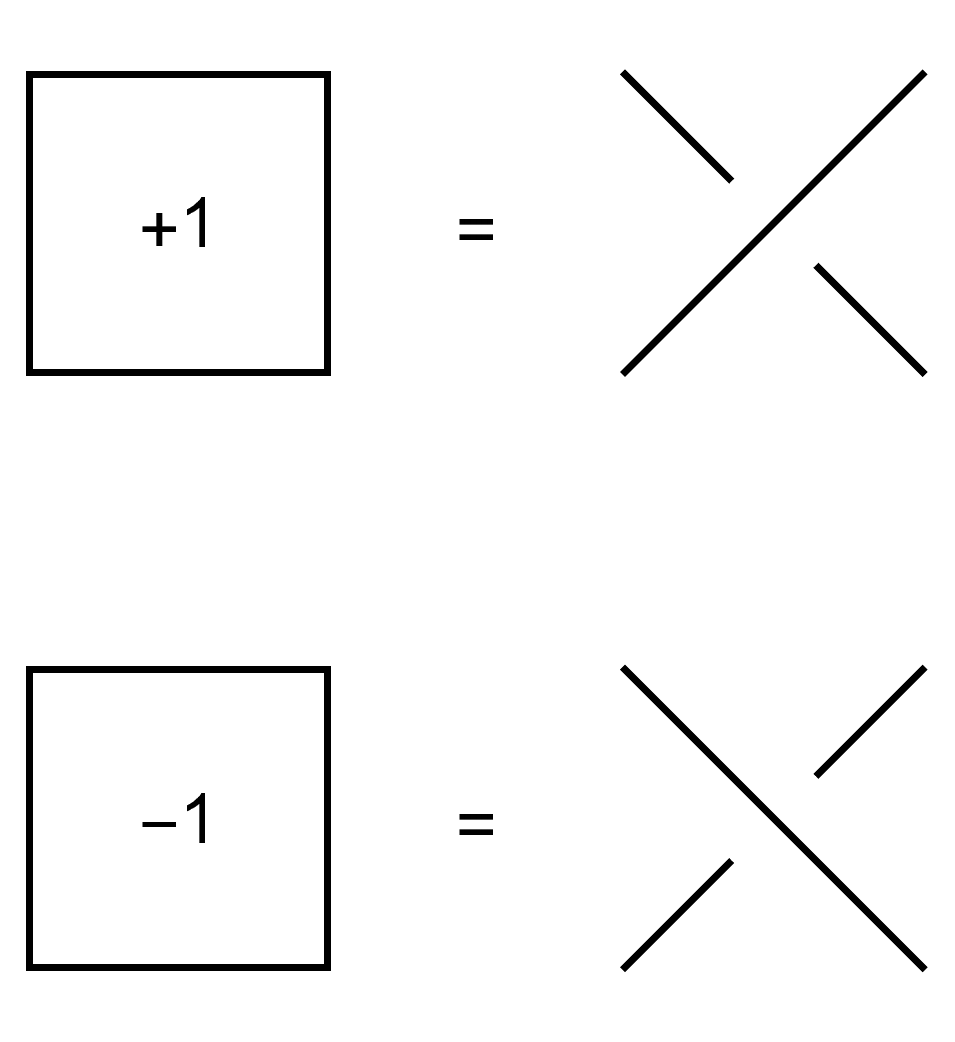}
  \end{center}
  \caption{Knot diagram of the twist knot $K^{\mathrm{twist}}_p$ (left), where
    the box represents repeated half-twists, according to the legend on the
    right.}
  \label{FIG:twist}
\end{figure}
\begin{figure}
  \begin{center}
    \includegraphics[height=120pt]{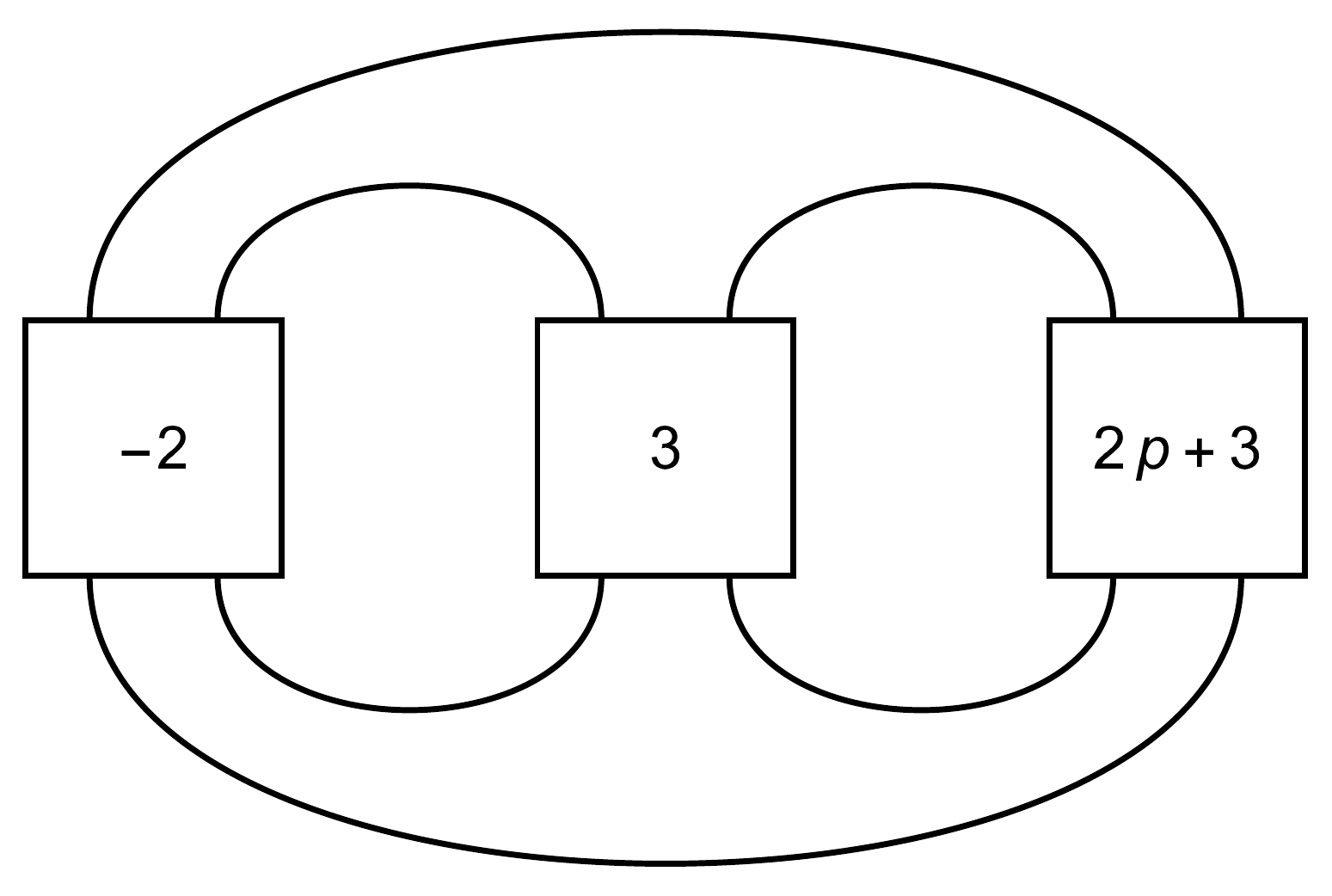}
  \end{center}
  \caption{Knot diagram of the $(-2,3,2p+3)$-pretzel knot
    $K^{\mathrm{pretz}}_p$; again the boxes represent repeated half-twists as
    described in Fig.~\ref{FIG:twist}.}
  \label{FIG:pretz}
\end{figure}

We investigate the colored Jones polynomials of two families of knots that
appeared previously in the literature: twist knots~\cite{GaroufalidisSun} and
pretzel knots~\cite{GaroufalidisKoutschan12a}, see Figures~\ref{FIG:twist}
and~\ref{FIG:pretz}.  While it is very difficult to compute the colored Jones
polynomial for an arbitrary given knot, one can give simpler formulas for
these two families.  For example, the $n$-th entry $J^{\mathrm{twist}}_p(n)$
of the colored Jones polynomial for the $p$-th twist knot
$K^{\mathrm{twist}}_p$ is given by the double sum
\[
  \sum_{k=0}^n \sum_{j=0}^k (-1)^{j+1} q^{k+pj(j+1)+j(j-1)/2} \bigl(q^{2 j+1}-1\bigr)
  \frac{\bigl(q^{1-n};q\bigr)_{\!k} \, \bigl(q^{1+n};q\bigr)_{\!k} \,
  \bigl(q^{k-j+1};q\bigr)_{\!j}}{\bigl(q;q\bigr)_{\!k+j+1}}.
\]
From this representation it is a routine task (but possibly computationally
expensive) to compute a $q$-holonomic recurrence equation for
$J^{\mathrm{twist}}_p(n)$ when $p$ is a fixed integer. This can be done either
by $q$-holonomic summation methods (as implemented in the \texttt{qMultiSum}
package~\cite{Riese03} or \texttt{HolonomicFunctions}
package~\cite{Christoph2010}) or by guessing (as implemented in the
\texttt{Guess} package~\cite{Kauers2009a}). For example, for $p=-1$ we obtain
the inhomogeneous $q$-recurrence
\begin{align*}
  & q^{2 n+2} \left(q^{n+2}-1\right) \left(q^{2 n+1}-1\right) J^{\mathrm{twist}}_{-1}(n+2)
    \;+\; \left(q^{n+1}-1\right)^2 \left(q^{n+1}+1\right) \bigl(q^{n+1}+{} \\
  & +q^{2 n+1}+q^{2 n+3}+q^{3 n+3}-q^{4 n+4}-1\bigr) J^{\mathrm{twist}}_{-1}(n+1)
    \;+\; q^{2n+2} \left(q^n-1\right) \\
  & \times\left(q^{2 n+3}-1\right) J^{\mathrm{twist}}_{-1}(n)
    \;=\; q^{n+1} \left(q^{n+1}+1\right) \left(q^{2 n+1}-1\right) \left(q^{2 n+3}-1\right).
\end{align*}

Garoufalidis and Sun have computed such an inhomogeneous $q$-recurrence
equation for each twist knot $K^{\mathrm{twist}}_p$ with $-15\leq p\leq15$;
the recurrences are available in electronic form
from~\cite{GaroufalidisSun}. Similarly, the $q$-recurrences satisfied by
$J^{\mathrm{pretz}}_p(n)$ for $-5\leq p\leq5$ are available
from~\cite{GaroufalidisKoutschan12a}. By observing that in each recurrence the
term $f(n+d)$ has (among others) a factor $(q^{n+d}-1)$, it is reasonable to
perform the substitution $f(n)\to f(n)/(q^n-1)$,  according to
Conjecture~\ref{conj.Stavros}. In the rest of this section, we only use
operators that were normalized in this way.

We have implemented Algorithms~\ref{ALGO:qWeylclosure} and~\ref{ALGO:desinop}
in Mathematica by using the packages
\texttt{HolonomicFunctions}~\cite{Christoph2010} and
\texttt{Singular}~\cite{Kauers2007}; the source
code and a demo notebook are freely available as part of the supplementary
electronic material~\cite{ElectronicYZ}. 
Note that we also modify Algorithm~\ref{ALGO:desinop} for desingularization 
of the trailing coefficient of a given $q$-difference operator 
in the corresponding package and notebook. 
We give an example about finding
desingularized operators in the context of knot theory.

\begin{ex} \label{EX:knots}
We consider the $q$-difference operators that correspond to the homogeneous
parts of the recurrences for the colored Jones polynomials of the knots
$K^{\mathrm{twist}}_{-1}$, $K^{\mathrm{twist}}_{2}$,
$K^{\mathrm{pretz}}_{-2}$, and $K^{\mathrm{pretz}}_{2}$.  For example, the
operator $P^{\mathrm{twist}}_{-1}$ corresponds, after normalization, to the
left-hand side of the above $q$-recurrence for $J^{\mathrm{twist}}_{-1}(n)$:
\begin{align*}
  P^{\mathrm{twist}}_{-1} ={}& q^2 x^2 \bigl(q x^2-1\bigr) \pa^2 - {} \\
  & (q x-1) (q x+1) \bigl(q^4 x^4-q^3 x^3-q^3 x^2-q x^2-q x+1\bigr) \pa +{} \\
  & q^2 x^2 \bigl(q^3 x^2-1\bigr)
\end{align*}
For space reasons, the other three operators are displayed in abbreviated form
only:
\begin{align*}
  P^{\mathrm{twist}}_{2} &= (qx-1) (qx+1) (qx^2-1) \pa^3 + \ell_{1,2} \pa^2 + \ell_{1,1} \pa + \ell_{1,0}, \\
  P^{\mathrm{pretz}}_{-2} &= (qx-1)(qx+1)(qx^2-1) \pa^3 + \ell_{2,2} \pa^2 + \ell_{2,1} \pa + \ell_{2,0}, \\
  P^{\mathrm{pretz}}_{2} &= q^{59} (qx-1) (q^2x+1) \pa^6 + \ell_{3,5} \pa^5 + \ell_{3,4} \pa^4 + \cdots + \ell_{3,0},
\end{align*}
where $\ell_{i,j} \in \bK[q][x]$. We now apply our desingularization algorithm
to each of the four operators.
\begin{enumerate}
\item By using Theorem~\ref{THM:orderbound}, we obtain an order bound~$b$ for
  a desingularized operator (see Table~\ref{TAB:exknots}).
\item Using Gr\"obner bases, we can find a generating set of~$M_b$.  Since the
  size of this generating set is large, we do not display it here.
\item By~\cite[Lemma 3.3.3]{Zhang2017}, we find the generator of~$I_b$ (see
  Table~\ref{TAB:exknots}).
\item It is straightforward to see that in each of the four cases, this single
  generator is the element in $I_b$ with minimal degree in~$x$.
\item Tracing back to the computation of steps (3) and (4), we find a
  $q$-difference operator $L \in \bK[q][x, \pa]$ of $\cont(P)$, which is of
  the following form:
  \begin{align*}
    L^{\mathrm{twist}}_{-1} & = q^4 x^2 \pa^3 - \left(q^9 x^4 - q^7 x^3 - q^5 x^3 - q^5 x^2 - q^4 x^2 - q^2 x + 1\right) \pa^2 -{} \\
    & \qquad q^4 x \left(q^4 x^4 - q^3 x^3 - q^3 x^2 - q^2 x^2 - q^2 x - x + q\right) \pa + q^7 x^3, \\
    L^{\mathrm{twist}}_{2} &= \pa^5 + p_{1,4} \pa^4 + p_{1,3} \pa^3 + \cdots + p_{1,0}, \\
    L^{\mathrm{pretz}}_{-2} &=  \pa^5 + p_{2,4} \pa^4 + p_{2,3} \pa^3 + \cdots + p_{2,0}, \\
    L^{\mathrm{pretz}}_{2} &=  \pa^{10} + p_{3,9} \pa^9 + p_{3,8} \pa^8 + \cdots + p_{3,0},
  \end{align*}
  where $p_{i,j} \in \bK[q][x]$.
\end{enumerate} 
We observe that in all four examples the minimal order for desingularized
operators matches with the predicted order bound, i.e., the bound is tight in
these cases. This can be seen by inspecting the $(b-1)$-st coefficient
ideal~$I_{b-1}$ (see Table~\ref{TAB:exknots}). We conclude that the sequences
that are annihilated by the four operators, respectively, consist indeed of
(Laurent) polynomials, provided that the initial values have this property as
well.
\end{ex}
\begin{table}
  \setlength{\tabcolsep}{10pt}
  \begin{center}
  \begin{tabular}{l|cccc}
    & $P^{\mathrm{twist}}_{-1}$ & $P^{\mathrm{twist}}_{2}$ & $P^{\mathrm{pretz}}_{-2}$ & $P^{\mathrm{pretz}}_{2}$ \\[0.5ex] \hline
    \rule{0pt}{12pt}order bound $b$ & 3 & 5 & 5 & 10 \\[1ex]
    generator of $I_b$ & $I_3=\langle x^2\rangle$ & $I_5=\langle 1\rangle$
      & $I_5=\langle 1\rangle$ & $I_{10}=\langle 1\rangle$ \\[1ex]
    generator of $I_{b-1}$ & & %$I_4 =
    $\langle q^3x^2-1 \rangle$ & %$I_4 =
    $\langle q^3x^2-1 \rangle$ & %$I_9 =
    $\langle q^4x-1 \rangle$
  \end{tabular}
  \end{center}
  \caption{Computations for Example~\ref{EX:knots}}
  \label{TAB:exknots}
\end{table}

\begin{ex} \label{EX:knots2}
By applying our desingularization algorithm to the unnormalized
$q$-recurrences of $J^{\mathrm{twist}}_{p}(n)$ for the same values of~$p$ as
in the previous example, we can prove that in these instances the operators
are not completely desingularizable, therefore confirming part (2) of
Conjecture~\ref{conj.Stavros}.
\end{ex}

Since Algorithms~\ref{ALGO:qWeylclosure} and~\ref{ALGO:desinop} involve
Gr\"obner bases computations, it is rather inefficient to find desingularized
operators when the size of the given $q$-difference operator is large.
Alternatively, we may apply guessing~\cite{Kauers2009a} to compute a
desingularized operator of a given $q$-difference operator, once we derive an
order bound by Theorem~\ref{THM:orderbound}.

In order to illustrate the guessing approach, we focus on a slightly modified
problem, namely that of finding \emph{bimonic recurrence equations}: we want
to completely desingularize both the leading and the trailing coefficient,
i.e., after desingularization these two coefficients should have the form
$q^{an+b}$ for some integers $a,b\in\bN$. The existence of such a recurrence
equation certifies that the bi-infinite sequence
$\bigl(f(n)\bigr){}_{n\in\bZ}$ has only Laurent polynomial entries. Note that
this approach is also suited for inhomogeneous recurrences.

It works as follows: assume we are given a (possibly inhomogeneous) recurrence
\[
  \underbrace{p_{-1}(q,q^n) + \sum_{i=0}^r p_i(q,q^n) f(n+i)}_{\textstyle =: R(n)} = 0,
\]
with $r\geq0$ and $p_i\in\bK[q,q^n]$ for $-1\leq i\leq r$. Define the polynomial
$c(q,q^n)\in\bK[q,q^n]$ by
\[
  c(q,q^n) = \frac{\lcm\bigl(p_0(q,q^n),p_r(q,q^n)\bigr)}{q^{an+b}}
\]
with integers $a,b\in\bN$ chosen such that $c(q,q^n)$ is neither divisible
by~$q$ nor by~$q^n$. The goal is to determine polynomials
$u_i(q,q^n)\in\bK[q,q^n]$ such that the coefficients $\ell_i(q,q^n)$, $-1\leq
i\leq r+s$, in the linear combination
\[
  \sum_{i=0}^s u_i(q,q^n) R(n+i) =
  \ell_{-1}(q,q^n) + \sum_{i=0}^{r+s} \ell_i(q,q^n)f(n+i)
\]
are all divisible by $c(q,q^n)$. Hence, we make an ansatz for the coefficients
of the linear combination, instead of trying to guess the desingularized
operator directly. The latter would be much more costly to compute (compare
the number of green dots with the number of blue dots in
Figure~\ref{FIG:support}).  The procedure is sketched in
Algorithm~\ref{ALGO:guess}. We have implemented it in Mathematica; the source
code and a demo notebook are freely available as part of the supplementary
electronic material~\cite{ElectronicCK}.
\newpage

\begin{algo} \label{ALGO:guess}
Given a recurrence $R(n)=p_{-1}(q,q^n) + \sum_{i=0}^r p_i(q,q^n) f(n+i)$ and a
factor $c(q,q^n)$ that is to be removed. Compute $u_i\in\bK[q,q^n]$ such that
$\sum_{i=0}^s u_i(q,q^n)R(n+i) =
c(q,q^n)\bigl(\ell_{-1}(q,q^n)+\sum_{i=0}^{r+s} \ell_i(q,q^n)f(n+i)\bigr)$ for
some polynomials $\ell_i\in\bK[q,q^n]$.
\begin{enumerate}
\item Make an ansatz of the form $A=\sum_{i=0}^s \sum_{j=e_i}^{d_i} c_{i,j}(q)
  q^{jn} R(n+i)$ (one may note that the coefficients $c_{0,j}$ and $c_{s,j}$
  are already prescribed (up to a constant multiple in $\bK(q)$) by the choice
  of $c(q,q^n)$.
\item Write $A$ in the form $A=a_{-1}(q,q^n)+\sum_{i=0}^{r+s} a_i(q,q^n) f(n+i)$.
\item For $-1\leq i\leq r+s$ compute the remainder of the polynomial division
  of $a_i(q,q^n)$ by $c(q,q^n)$, regarded as polynomials in~$q^n$.
\item Perform coefficient comparison in these remainders with respect to~$q^n$.
\item Solve the resulting linear system over $\bK(q)$ for the unknowns
  $c_{i,j}\in\bK[q]$ (we may clear denominators since the system is
  homogeneous).
\item Return $u_i(q,q^n)=\sum_{j=e_i}^{d_i} c_{i,j}(q) q^{jn}$.
\end{enumerate}
\end{algo}

\begin{figure}
  \begin{center}
    \includegraphics[width=0.9\textwidth]{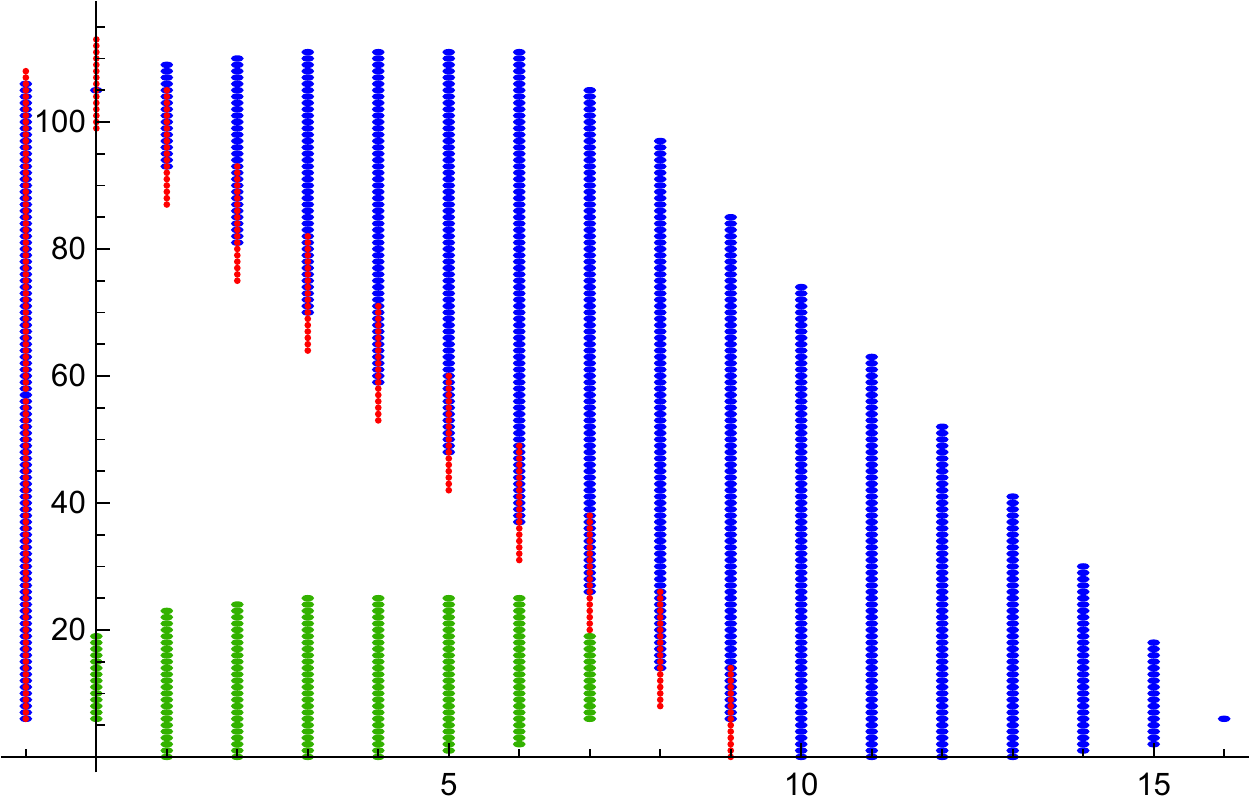}
  \end{center}
  \caption{$q^n$-support of the coefficients $p_{-1},\dots,p_9\in\bQ[q,q^n]$
    of the inhomogeneous $q$-recurrence for $K^{\mathrm{pretz}}_3$ (red), $q^n$-support of
    the coefficients $u_0,\dots,u_7$ (green), and $q^n$-support of the resulting bimonic
    recurrence (blue), represented by the coefficients $\ell_{-1},\dots,\ell_{16}$; the
    horizontal axis gives the index of the coefficient, the vertical axis the exponent
    of~$q^n$.}
  \label{FIG:support}
\end{figure}

It is interesting to note that our computed bimonic recurrences reveal certain
symmetries in their coefficients, more precisely, they are kind of palindromic.
For example, the bimonic $q$-recurrence that we found for $J^{\mathrm{pretz}}_{-2}(n)$,
written in the form
\[
  \sum_{j=0}^9 \ell_{-1,j}(q) q^{jn} + \sum_{i=0}^5 \sum_{j=0}^9 \ell_{i,j}(q) q^{jn} f(n+i)
\]
has the following palindromicity properties ($0\leq i\leq 5$, $0\leq j\leq 9$).
\[
  \ell_{i,j} = q^{5j-i-20} \ell_{5-i,9-j}
  \quad\text{and}\quad
  \ell_{-1,j} = -q^{5j-25} \ell_{-1,10-j}.
\]
This phenomenon is illustrated in Table~\ref{TAB:pretz-2}. It is also visible
in Figure~\ref{FIG:support} but on a different example. The occurrence of
palindromic operators in the context of knot theory has been studied in more
detail in~\cite{GaroufalidisKoutschan13}. Indeed, if we use the bimonic
recurrence to define the sequence $f(n)=(q^n-1)J^{\mathrm{pretz}}_{-2}(n)$ for
$n\leq0$ then we see that this sequence is palindromic:
\[
  f(n) = -q^n \, f(-n) \quad\text{for all }n\in\bN.
\]

\begin{table}
\[
\begin{array}{@{}l|r@{\,\cdot\,}lr@{\,\cdot\,}lr@{\,\cdot\,}lr@{\,\cdot\,}lr@{\,\cdot\,}lr@{\,\cdot\,}lr@{\,\cdot\,}l@{}}
  & \multicolumn{2}{c}{1} & \multicolumn{2}{c}{f(n)} & \multicolumn{2}{c}{f(n+1)} &
  \multicolumn{2}{c}{f(n+2)} & \multicolumn{2}{c}{f(n+3)} & \multicolumn{2}{c}{f(n+4)} &
  \multicolumn{2}{c}{f(n+5)} \\[0.5ex] \hline \rule{0pt}{12pt}
  q^{0n} & \multicolumn{2}{c}{} & \multicolumn{2}{c}{} & \multicolumn{2}{c}{} &
    \multicolumn{2}{c}{} & -1 & 2^4 & 1 & 2^3 & \multicolumn{2}{c}{} \\
  q^{1n} & \multicolumn{2}{c}{} & \multicolumn{2}{c}{} & \multicolumn{2}{c}{} &
    -1 & 2^{12} & 18 & 2^7 & -1 & 2^1 & \;1 & 2^0 \\
  q^{2n} & 1 & 2^9 & \multicolumn{2}{c}{} & \multicolumn{2}{c}{} & 33 & 2^{10} &
    13 & 2^7 & 24 & 2^5 & \multicolumn{2}{c}{} \\
  q^{3n} & 153 & 2^8 & \multicolumn{2}{c}{} & 1 & 2^{16} & 5 & 2^{12} & 177 & 2^9 &
    3 & 2^9 & \multicolumn{2}{c}{} \\
  q^{4n} & 93 & 2^{11} & \multicolumn{2}{c}{} & -1 & 2^{17} & 89 & 2^{13} & 3 &
    2^{14} & 3 & 2^{13} & \multicolumn{2}{c}{} \\
  q^{5n} & \multicolumn{2}{c}{} & \multicolumn{2}{c}{} & 3 & 2^{17} & 3 & 2^{17} &
    89 & 2^{15} & -1 & 2^{18} & \multicolumn{2}{c}{} \\
  q^{6n} & -93 & 2^{16} & \multicolumn{2}{c}{} & 3 & 2^{18} & 177 & 2^{17} & 5 &
    2^{19} & 1 & 2^{22} & \multicolumn{2}{c}{} \\
  q^{7n} & -153 & 2^{18} & \multicolumn{2}{c}{} & 24 & 2^{19} & 13 & 2^{20} & 33 &
    2^{22} & \multicolumn{2}{c}{} & \multicolumn{2}{c}{} \\
  q^{8n} & -1 & 2^{24} & 1 & 2^{20} & -1 & 2^{20} & 18 & 2^{25} & -1 & 2^{29} &
    \multicolumn{2}{c}{} & \multicolumn{2}{c}{} \\
  q^{9n} & \multicolumn{2}{c}{} & \multicolumn{2}{c}{} & 1 & 2^{27} & -1 & 2^{27} &
    \multicolumn{2}{c}{} & \multicolumn{2}{c}{} & \multicolumn{2}{c}{} \\
\end{array}
\]
\caption{Coefficients of the bimonic $q$-recurrence for
  $J^{\mathrm{pretz}}_{-2}(n)$; for space reasons only the evaluations for
  $q=2$ are given. In order to reveal the underlying symmetry, common powers
  of~$q$ are kept as powers of~$2$; for example, the entry $24\cdot2^5$ in the
  last-but-one column comes from the coefficient of $q^{2n}f(n+4)$ which is
  $q^5(q^4+q^3-q+2)$. The first column corresponds to the inhomogeneous part.}
\label{TAB:pretz-2}
\end{table}

We have applied Algorithm~\ref{ALGO:guess} to all recurrences associated to the
twist knots $K^{\mathrm{twist}}_p$ for $-14\leq p\leq15$ and to some of the pretzel
knots $K^{\mathrm{pretz}}_p$. All these results can be found in the supplementary
electronic material~\cite{ElectronicCK}.

\section{Conclusion} \label{SECT:conclusion} 
In this paper, we determine a generating set of the $q$-Weyl closure of a
given univariate $q$-difference operator, and compute a desingularized
operator whose leading coefficient has minimal degree in~$q$. Moreover, we use
our algorithms to certify that several instances of the colored Jones
polynomial are Laurent polynomial sequences.  A challenging topic for future
research would be to consider the corresponding problems in the multivariate
case.

Another direction of research we want to consider in the future is the
desingularization problem for linear Mahler equations~\cite{Mahler1929}, which
attracted quite some interest in the computer algebra community recently, see
for example~\cite{Chyzak2016}.  Mahler equations arise in the study of
automatic sequences, in the complexity analysis of divide-and-conquer
algorithms, and in some number-theoretic questions.

\paragraph{Acknowledgment}
The authors would like to thank Stavros Garoufalidis for providing the
examples of Section~\ref{SECT:application} and for enlightening discussions on
the knot theory part. We are also grateful to the anonymous referee for the
detailed report and for numerous valuable comments.

\bibliographystyle{abbrv}
% \bibliography{reference}
%\def\cprime{$'$}

\end{document}